\documentclass[11pt]{article}
\usepackage{amsmath,amssymb,amsfonts,latexsym,graphicx,amsthm}
\usepackage{fullpage,color}
\usepackage{url,hyperref}
\usepackage{comment}
\usepackage[linesnumbered,boxed,ruled,vlined]{algorithm2e}
\usepackage{framed}
\usepackage[shortlabels]{enumitem}
\usepackage{cleveref}
\usepackage{todonotes}
\usepackage[normalem]{ulem}

\usepackage{subcaption}
\usepackage[margin=1in]{geometry}
\usepackage{ifthen}
\usepackage{thm-restate}
\usepackage{xcolor}

\usepackage{stmaryrd}

\newtheorem{theorem}{Theorem}
\newtheorem{lemma}{Lemma}
\newtheorem{proposition}{Proposition}
\newtheorem{definition}{Definition}

        {\medskip}

\makeatletter
\newcommand{\xrightsquigarrow}[2][]{\ext@arrow 0359\rightsquigarrowfill@{#1}{#2}}
\def\rightsquigarrowfill@{\arrowfill@\relax\relax\rightsquigarrow}
\makeatother

\newcommand{\eps}{\varepsilon}
\newcommand{\hide}[1]{}

\newcommand{\dist}{\mathrm{dist}}
\newcommand{\width}{\mathrm{width}}

\DeclareMathOperator*{\argmin}{arg\,min}

\DeclareMathOperator{\diam}{\mathrm{diam}}

\definecolor{BrickRed}{rgb}{0.8, 0.25, 0.33}

\begin{document}

	\title{Bichromatic Geometric Spanners\thanks{Research supported, in part, by the NSF award DMS-2154347.}}
    \author{Theodore Fung\thanks{Department of Mathematics, California State University Northridge, Los Angeles, CA, USA.} 
    \and Csaba D. T\'oth\thanks{Department of Mathematics, California State University Northridge, Los Angeles, CA; and Department of Computer Science, Tufts University, Medford, MA, USA.}}
	\date{}
	\maketitle

\begin{abstract}
For an edge-weighted graph $G=(V,E)$ and a stretch parameter $t\geq 1$, a $t$-spanner is a subgraph $H\subseteq G$ such that the shortest path distances in $G$ and $H$ satisfy $\delta_H(u,v)\leq t\, \delta_G(u,v)$ for all $u,v\in V$. In metric spanners, $V$ is a finite metric space, and $G$ is the complete graph with edge weights corresponding to the distances between the endpoints. When $G$ is the complete graph on $n$ points in the plane, $O(n)$-size $t$-spanners are possible for any $t>1$: For every $\eps>0$, there is an $(1+\eps)$-spanner with $O(n/\eps)$ edges (i.e., the stretch can be arbitrarily close to 1). 

When $G=K(R,B)$ is the complete bipartite graph on $n$ bichromatic points in the plane, in general, no spanner construction can guarantee stretch $t<3$ with $o(n^2)$ edges. Bose et al.~(SICOMP 2009) constructed a $(3+\eps)$-spanner with $O(n\log n)$ edges for any constant $\eps>0$. Our main result is a new construction for a $(3+\eps)$-spanner with $O(\sqrt{1/\eps}\cdot n)$ edges. Eliminating the $O(\log n)$ factor resolves a problem left open for more than 17 years, and raises a new research problem about optimizing the dependence on $\eps$. We also study spanners for $G=K(R,B)$ on $n$ bichromatic points on the real line: In this case, we show that the MST of $K(R,B)$ is a 7-spanner, and we construct a 3-spanner with at most $2n-3$ edges. 
\end{abstract}

\section{Introduction}
\label{sec:intro}

A $t$-spanner for an edge-weighted graph $G=(S,E)$  is a subgraph $H=(S,E')$ that contains, for every edge $ab\in E$, an $ab$-path of weight at most $t\cdot w(ab)$. 
Let $S$ be a set of $n$ bichromatic  points in Euclidean plane, with coloring $c:S\to \{\mathrm{red},\mathrm{blue}\}$. Let $R$ and $B$, resp., denote the set of red and blue points in $S$, and let $K(R,B)$ denote the complete bipartite graph on the partite sets $R$ and $B$. The weight of an edge $ab$ is the Euclidean length $|ab|$ of the line segment $ab$.  

For a set $S$ of $n$ (uncolored) points in the plane and any $\eps>0$, there exists a $(1+\eps)$-spanner with $O(n/\eps)$ edges~\cite{Clarkson87,Keil88}. 
Bose et al.~\cite{BoseCCMMS09} noted that for a complete bipartite graph $K(R,B)$, in general, one cannot find a $t$-spanner with $o(n^2)$ edges and stretch $t<3$. Specifically, for every $\eps>0$, one can arrange $\lceil n/2\rceil$ red and $\lfloor n/2\rfloor$ blue points in two clusters of diameter $\frac{\eps}{3}$ at unit distance apart: If a subgraph $H\subseteq K(R,B)$ misses any bichromatic edge $ab$, then every $ab$-path in $H$ has length at least $(3-\eps)|ab|$.
On the positive side, Bose et al.~\cite{BoseCCMMS09} constructed, for every bichromatic set $S=R\cup B\subset \mathbb{R}^2$ and constant $\eps>0$, an $(5+\eps)$-spanner with $O(n)$ edges and a $(3+\eps)$-spanner with  $O(n\log n)$ edges. They did not analyze the dependency on the stretch parameter $\eps$. The main problem left open in their work is whether a bichromatic $(3+\eps)$-spanner with $O(n)$ edges always exists. Our main result settles this problem in the affirmative. 

\begin{theorem}\label{thm:main}
For every $\eps>0$ and every set of $n$ bichromatic points in the plane, there exists a $(3+\eps)$-spanner with $O(\sqrt{1/\eps}\cdot n)$ edges, which can be constructed in $O(\sqrt{1/\eps}\cdot n\log n)$ time.
\end{theorem}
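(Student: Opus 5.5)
We reduce the theorem to a well-separated pair decomposition (WSPD) argument in which each well-separated pair is handled with only $O(\sqrt{1/\eps})$ amortized edges rather than $O(\log n)$. Let $s=\Theta(1/\eps)$ be a separation ratio. We compute a WSPD $\{(A_i,B_i)\}$ of $S$ (ignoring colors) with $O(s^2 n)$ pairs, or, more economically, the compressed quadtree of $S$ together with its $O(n)$ cells.

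The key observation is the lower-bound example: the only bichromatic edges that force stretch close to $3$ are those $ab$ for which $a$ has a nearby blue point $b'$ and $b$ has a nearby red point $a'$. Then the path $a\,b'\,a'\,b$ has length at most $|ab|+2(|ab'|+|a'b|)$, roughly $3|ab|$ when the clusters have diameter about $|ab|$.

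So for every point $p$ we designate its \emph{nearest opposite-colored neighbor} $\nu(p)$ and keep the edge $p\nu(p)$. For each quadtree cell $C$ we pick a representative red point $r_C$ and a blue point $b_C$, and we keep a $(1+\eps)$-spanner of type Theta-graph with $k=\Theta(\sqrt{1/\eps})$ cones per point. That is, each point $p$ connects, in each of $k$ cones, to the nearest opposite-colored point in that cone, which uses $O(\sqrt{1/\eps}\,n)$ edges.

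The stretch analysis for an edge $ab$ with $a$ red and $b$ blue is by induction on $|ab|$ and has two cases.

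\textbf{Case 1.} If $b$ is (nearly) the closest blue point to $a$ in its cone, the cone edge $ab'$ and induction on $b'b$ give the route. The issue is that $b'b$ is monochromatic, so we need a red intermediate. We route $b\to\nu(b)$, then $\nu(b)\to$ (a blue point near $b'$), and so on.

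\textbf{Case 2.} Otherwise we use the detour $a\,b'\,a'\,b$ with $|ab'|,|a'b|$ small and apply induction to the bichromatic edge $b'a'$.

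The reason $\sqrt{1/\eps}$ cones suffice instead of $1/\eps$ is the budget available at stretch $3+\eps$ rather than $1+\eps$. A cone of angle $\theta$ incurs a second-order loss $O(\theta^2)|ab|$ when the detour's two short edges are "charged" against the slack $2(|ab|-\ldots)$. Choosing $\theta=\sqrt{\eps}$ gives an error $O(\eps)|ab|$. Concretely, if $|ab'|\le|ab|$ with angle at most $\theta$, then $|b'b|\le |ab|-|ab'|\cos\theta+O(\theta)\ldots$, and in the 3-stretch regime the linear term cancels thanks to the symmetric treatment of both endpoints.

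The construction time is $O(\sqrt{1/\eps}\, n\log n)$ via the standard sweep for Theta-graphs per cone direction, with bichromatic nearest-in-cone queries.

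\textbf{Main obstacle.} The induction invariant must hold for both colors simultaneously: the cone argument shortens the distance, but the path must alternate colors. I would first formulate the inductive hypothesis as $\delta_H(a,b)\le 3|ab|-2\,\max(d(a),d(b))\cdot c$ plus an $\eps|ab|$ term, where $d(p)=|p\nu(p)|$. This lets the nearest-neighbor edges absorb color switches. I would then verify that the cone step with angle $\sqrt\eps$ preserves this invariant; this calculation is where the difficulty lies.
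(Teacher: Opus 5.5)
\textbf{Gap 1: the construction as specified fails.} Your proposal does not establish the stretch bound, which is the substance of the theorem, and the graph you specify is not a $(3+\eps)$-spanner. As far as the proposal says, $H$ consists of the cone edges (each point joined, in each of its $k$ cones, to the nearest opposite-colored point there) and the edges $p\nu(p)$; the WSPD and the representatives $r_C,b_C$ contribute no edges.

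Take red points $a_i=(0,y_i)$ and blue points $b_i=(D,z_i)$, $i=1,\dots,m$, with $y_1<z_1<y_2<z_2<\dots<y_m<z_m$ in a window of height $\rho$. Choose $\rho/D$ so small that, after a slight rotation if necessary, from every point the whole opposite-colored cluster lies in one cone. Choose the consecutive gaps of this merged sequence strictly decreasing. Since $|a_ib_j|^2=D^2+(z_j-y_i)^2$, each $a_i$ picks $b_i$, each $b_i$ with $i<m$ picks $a_{i+1}$, and $b_m$ picks $a_m$; the edges $p\nu(p)$ are the same. So $H$ is the path $a_1b_1a_2b_2\cdots a_mb_m$, and every $a_1b_m$-path in $H$ has length at least $(2m-1)D$, while $|a_1b_m|\le\sqrt{D^2+\rho^2}$. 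Already $m=3$ gives stretch about $5$, and the stretch is unbounded as $m$ grows.

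\textbf{Gap 2: the stretch analysis does not go through.} Reading ``nearest'' as smallest projection onto the cone bisector, as in $\Theta$-graphs, avoids this particular example, but the argument is still incomplete. In Case~1 the pair $b'b$ is monochromatic, so the induction hypothesis says nothing about it, and the route $b\to\nu(b)\to\cdots$ has no length bound. In Case~2 you apply induction to the \emph{long} middle pair $b'a'$. The bound $|ab'|+(3+\eps)|b'a'|+|a'b|$ then exceeds $(3+\eps)|ab|$ once $|b'a'|$ is close to $|ab|$. If $b'$ and $a'$ are displaced sideways in opposite directions, $|b'a'|>|ab|$ and induction does not even apply. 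The claimed cancellation of the first-order angular term and the invariant involving $d(p)$ are, as you note, unverified.

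\textbf{The missing idea.} Reverse these roles: use a direct edge for the long hop, apply induction only to short bichromatic pairs, and decide which is available by a monochromatic/bichromatic dichotomy. The paper pairs elongated tiles $T_{\theta,\lambda}(i,j)$ and $T_{\theta,\lambda}(i+10,j)$: aspect ratio $1/\delta$ with $\delta=\sqrt{\eps/7}$, $\Theta(1/\sqrt{\eps})$ orientations, geometric scales. \Cref{lemma:spacing} guarantees every red-blue pair is captured by such a tile pair.

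If both $3\times 3$ neighborhoods are monochromatic, a bistar through the rightmost red $r^*$ and leftmost blue $b^*$ gives the path $r,b^*,r^*,b$. All three edges have extent at most $x(b)-x(r)$ along the tile axis and $O(\delta)$ across it. By $\sqrt{x^2+y^2}\le x+y^2/2$, each has length at most $(1+\eps/3)|rb|$, with no induction. This extreme-point choice makes the angular loss quadratic, which is why $\sqrt{\eps}$ angular resolution suffices.

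If a neighborhood is bichromatic, it contains a bichromatic pair at distance $O(\diam T)$, far below the separation of $9$ tile widths. Inducting on such pairs costs only $O(1)\cdot(3+\eps)$ out of a budget of $9(3+\eps)$, so a star or two representative edges suffice.

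Linear size then follows by charging. Star and bistar edges are charged to their endpoints in monochromatic neighborhoods, which become bichromatic $26$ scales higher. Each bichromatic tile pair is charged to a branching node of a compressed nonatree, and each such node receives $O(1)$ charges.
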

While we settle the problem posed by Bose et al.~\cite{BoseCCMMS09}, our result raises a new problem to determine the optimal tradeoff between $\eps>0$ and the size of a bichromatic $(3+\eps)$-spanner in the plane. 
%
%
%
%
\Cref{thm:main} generalizes to multichromatic point sets in Euclidean $d$-space for constant dimension $d\geq 2$, and yields a $(3+\eps)$-spanner with $O(n / \eps^{(d-1)/2})$ edges. We describe the spanner construction and analyze it in detail in the plane, and then briefly sketch the generalization to higher dimensions. 

We also study spanners for $K(R,B)$ in the real line. The lower bound construction by Bose et al.~\cite{BoseCCMMS09} can be realized already in the line: It shows that, in general, a spanner construction cannot guarantee stretch $t<3$ with $o(n^2)$ edges. We construct a 3-spanner with at most $2n-3$ edges (\Cref{prop:3}). We also show that the bichromatic Euclidean MST (i.e., the MST of the complete bipartite graph $K(R,B)$ with Euclidean edge weights) is a 7-spanner; and the stretch bound 7 is the best possible (\Cref{prop:1D}). For comparison, recall that for $n$ (uncolored) points on the line, the Euclidean MST is an $x$-monotone path, which is a 1-spanner. However, for $n$ (uncolored) points in the plane, the Euclidean MST has unbounded stretch. 

\smallskip\noindent\textbf{Comparison to Previous Work and Technical Highlights.} 
As mentioned above, Bose et al.~\cite{BoseCCMMS09} constructed a $(3+\eps)$-spanner with $O(n\log n)$ edges for a bichromatic set $S=R\cup B$ of $n$ points and constant $\eps>0$. Their spanner construction uses a hierarchical space partition, but a crucial step of their approach has a $\Omega(n \log n)$ barrier. Our approach is significantly different and perhaps conceptually simpler. We briefly review the spanner construction in~\cite{BoseCCMMS09} and point out key differences from our construction. 

For a bichromatic set $S=R\cup B$ of $n$ points in the plane, Bose et al.~\cite{BoseCCMMS09} use the classical \emph{well-separated pair decomposition} (WSPD, for short) by Callahan and Kosaraju~\cite{CallahanK95}. A WSPD with separation parameter $\varrho>0$ is a collection of unordered pairs $\mathcal{P}=\big\{\{S_i,T_i\}: i=1,\ldots ,m\big\}$ such that every point pair $\{s,t\}\in \binom{S}{2}$ appears there is an $i$ such that $|\{s,t\}\cap S_i|=1$ and $|\{s,t\}\cap T_i|=1$; and $\dist(R_i,B_i)\geq \varrho\cdot \max\{\diam(R_i),\diam(B_i)\}$ for all $i$. A WSPD of size $m=O(n)$ can be construed using split trees~\cite{CallahanK95,NS07} or compressed quadtrees~\cite{HarPeled11}. For an uncolored set of $n$ points in $\mathbb{R}^d$, WSPDs with separation parameter $\varrho=\Theta(1/\eps)$ were previously used for constructing $(1+\eps)$-spanners of size $O(n/\eps^d)$~\cite{CallahanK95,FischerH05,HPM06}; and this bound is the best possible. 
Given a WSPD, each set $S_i$ (resp., $T_i$) could be either monochromatic or bichromatic. 
Bose et al.~\cite{BoseCCMMS09} choose an arbitrary \emph{representative} from each monochromatic set $S_i$ and $T_i$; and two representatives of different colors from bichromatic sets. 
Their spanner consists of specific edges chosen for each pair $\{S_i,T_i\}$ of the WSPD. If $S_i$ and $T_i$ are bichromatic, they add a single edge between a pair of representative from $S_i$ and $T_i$, respectively. However, if a set $S_i$ is monochromatic, they need a path from all points of $S_i$ to its representative: They achieve this by connecting all points in $S_i$ to the representative of a closest cluster $S_j=\mathrm{cl}(S_i)$ that contains a point of the other color (hence every point in $S_i$ will have a 2-hop path to the representative of $S_i$). They prove that this graph is $(5+\eps)$-spanner 
of size $O(n)$. They further show that \emph{if} the WSPD has the additional property that 
$\min\{|S_i|,|T_i|\}=1$ for all $i$, then their construction yields a $(3+\eps)$-spanner. However, 
a WSPD with the additional property that size $m=O(n\log n)$, which is the best possible~\cite{CallahanK95,NS07}, and dominates the size of their $(3+\eps)$-spanner.

Our spanner design is crucially different in handling well-separated pairs with monochromatic clusters. If both $S_i$ and $T_i$ are bichromatic, we also add just two edges between representatives. 
If $S_i$ is monochromatic, we connect all points in $S_i$ to the representative of $T_i$ of the opposite color (and similarly, if $T_i$ is monchromatic, we connect all points in $T_i$ to a representative of $S_i$). Overall, for each well-separated pair $(S_i,T_i)$, a star, a bistar, or just a pair of edges between $S_i$ and $T_i$ (see Cases 1--3 in \Cref{subsec:construct}). In particular, we do not use a detour via any closest cluster $S_j=\mathrm{cl}(S_i)$. This approach would already give a bichromatic $(3+\eps)$-spanner of size $O(n)$ for any constant $\eps>0$.

Instead of a quadtree-based WSPD, however, we use a custom-made space partition scheme with rectangular cells of aspect ratio $\Theta(\sqrt{1/\eps})$ to improve the dependence on $\eps>0$. In a standard quadtree-based WSPD with separation parameter $\varrho=\Theta(1/\eps)$, the number of pairs is $m=O(\varrho^2 n)=O(n/\eps^2)$~\cite[Theorem~4.1.3]{HarPeled11}. 
Even though the separation parameter must be $\varrho=\Theta(1/\eps)$ for a $(1+\eps)$-spanner on uncolored point sets, our stretch analysis for a $(3+\eps)$-spanner also works with a smaller separation parameter $\varrho=\Theta(\sqrt{1/\eps})$.
This would reduce the size of our spanner to $O(n/\eps)$. 

Furthermore, we have noticed that our inductive stretch analysis goes through even if we replace the square cells of a quadtree with rectangular cells with a larger aspect ratio provided that any two edges between  the well-separated cells make an angle $O(\sqrt{\eps})$. This relaxes the separation condition to $\dist(S_i,T_i)\geq \Theta(\sqrt{1/\eps}) \cdot \max\{\mathrm{width}(S_i),\mathrm{width}(T_i)\}$, and further reduces the size of our $(3+\eps)$-spanner to $O(n/\sqrt{\eps})$. We present this construction (without standard quadtrees and  WSPDs) in \Cref{sec:2D}. Our system of tilings with  exponentially increasing scales and $\Theta(\sqrt{\eps})$ directions is presented in \Cref{subsec:tiling}, and our spanner is constructed based on pairs of well-separated tiles in \Cref{subsec:construct}. The stretch and sparsity analyses are presented in \Cref{subsec:analysis} and \Cref{subsec:analysis2}, respectively. We analyze the running time of our spanner construction in~\Cref{ssec:runtime}, and briefly outline generalizations to dimensions $d\geq 3$ in \Cref{ssec:dspace}.

\smallskip\noindent\textbf{Previous work on bichromatic points.} 
Bose et al.~\cite{BoseCCMSZ09} studied a related problem: Given a finite set $S$ in the plane, partition $S$ into two nonempty color classes, $S=R\cup B$, so as to minimize the spanning ratio of the resulting complete bipartite graph $G=K(R,B)$, where the weight of every edge equals its Euclidean length. This problem is fundamentally different: In our problem, the color classes and $K(R,B)$ are fixed, and we need to find a small spanner in $K(R,B)$. 

Bereg et al.~\cite{BeregHKTTZ25} recently considered Euclidean spanners for the graph $G(R,B)$ of all red-blue and blue-blue edges (excluding red-red edges) on a set $S=R\cup B$ of $n$ bichromatic points in $\mathbb{R}^d$.  They constructed a 1.998-spanner of size at most $13n$ for $G(R,B)$ in the plane, and showed that the stretch of any spanner for $G(R,B)$ with $O(n)$ edges is at least $1+\eta$ for some absolute constant $\eta>0$. They also constructed a 1-spanner of size at most $2n$ for points in the real line. 

Bichromatic variants for network optimization problems on $n$ points in the plane, restricted to $K(R,B)$, have been studied extensively. In general, these problems are known to be more challenging than their original uncolored versions: For example, the Euclidean MST is always a noncrossing straight-line graph; however, the bichromatic Euclidean MST may have crossing edges~\cite{AgarwalES91,BiniazBMS18}. Researchers studied
bichromatic MSTs~\cite{BiniazBEMMS18}, bichromatic noncrossing MSTs~\cite{AkitayaBDKST25,BorgeltKLLMSV09}, bichromatic bottleneck MSTs~\cite{Abu-AffashBCM21}, and bicromatic bottlenck noncrossing matchings~\cite{BiniazMS14a}. 
Bichromatic TSP~\cite{KarpR14,Nederlof20}, Euclidean TSP~\cite{BaltzS05} 
and bichromatic Voronoi diagrams \cite{BOP25,MantasPSW25} were also studied. 
Many of these problems are already nontrivial for bichromatic points on a line~\cite{Abu-AffashBCC19,BandyapadhyayBB21}.

\section{Bichromatic $(3+\eps)$-Spanners in the Plane}
\label{sec:2D}

\subsection{Tilings of the Plane}
\label{subsec:tiling} 

We begin by establishing some preliminary definitions. For a given $\eps \in (0,1)$ and a separation constant $\mu \in \{9,10,\ldots,14\}$, let $\delta = \sqrt{ \eps /7} = \Theta(\sqrt{\eps})$. We construct a tiling of the plane with rectangular tiles of aspect ratio $1/\delta$. 
\begin{definition} \label{def:tiling}
   For every $i,j\in \mathbb{Z}$, we define the \emph{tile}
        \[ T(i,j)= [i-1,i] \times [(j-1)\delta, j\delta].\]
    The set of all tiles forms a \emph{tiling} $T=\{T(i,j): i,j \in \mathbb{Z}\}$.
\end{definition}

In our spanner construction, we use several affine copies of the tiling $T$, each copy is obtained by transforming all tiles in $T$ by the same affine transformation (rotation, scaling, and translation). Recall that every tile $T(i,j)\in T$ is an axis-aligned rectangle of width 1. Let $T'=\{T(i,j)+(0,\frac12): i,j\in \mathbb{Z}\}$ denote the tiling obtained by $T$ by translation, with vector $(0,\frac12)$. For every angle $\theta\in [0,2\pi)$ and every $\lambda>0$, let $T_{\theta,\lambda}$ (resp., $T'_{\theta,\lambda}$) be the tiling obtained from $T$ (resp., $T'$) by a counterclockwise rotation through $\theta$ about the origin, followed by a dilation with factor $\lambda$. 

However, we restrict ourselves to a discrete family of tilings: We use only $O(\sqrt{1/\eps})$ angles $\theta$, and a discrete set of exponentially increasing scaling factors, that we define now. Let $\varphi = \arctan(\delta / (2\mu+4))$  and $k = \lceil \pi / \varphi \rceil$. Note that $\varphi\in \Theta (\sqrt{\eps})$ and $k\in \Theta(\sqrt{1/\eps})$.
Our discrete set of $k$ angles is $\Phi =  \{2 r \pi / k : r = 0,1,\ldots , k-1\}$. 
Let $\gamma =3^{1/(\mu+2)}$, and note that $3^{1/(\mu+2)}< 1+1/\mu$.
Our discrete set of scaling factors is $\Lambda = \{ \gamma^p : p \in \mathbb{Z}\}$. 
The family of all tilings with rotation angle in $\Phi$ and scaling factor in $\Lambda$ is  
    \[\mathcal{T} = \bigcup_{\theta \in \Phi}\bigcup_{\lambda \in \Lambda}  \big( T_{\theta, \lambda}\cup T_{\theta, \lambda}'\big).\] 

The following lemma provides an important property of the family of tilings $\mathcal{T}$, which also plays a crucial role in the stretch analysis of our construction.

\begin{lemma}\label{lemma:spacing}
    For every pair of distinct points $a,b \in \mathbb{R}^2$, there exist $\theta^* \in \Phi$,  $\lambda^* \in \Lambda$ and $i^*,j^* \in \mathbb{Z}$ such that at least one of the following holds:
    \begin{enumerate}
        \item tiles $T_{\theta^*, \lambda^*}(i^*,j^*)$ and $T_{\theta^*, \lambda^*}(i^*+\mu+1,j^*)$ each contain one of $a$ and $b$,
        \item tiles $T_{\theta^*, \lambda^*}'(i^*,j^*)$ and $T_{\theta^*, \lambda^*}'(i^*+\mu+1,j^*)$ each contain one of $a$ and $b$.
    \end{enumerate}
\end{lemma}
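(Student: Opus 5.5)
The plan is to reduce the statement to a one-dimensional problem along a well-chosen direction. First fix the direction. Let $\alpha$ be the direction angle of $b-a$. Swapping the roles of $a$ and $b$ is allowed, since each conclusion only asks that the two tiles each contain one of $a,b$. The angles in $\Phi$ are spaced $2\pi/k\le 2\varphi$ apart, so I pick $\theta^*\in\Phi$ with $|\alpha-\theta^*|\le\varphi$. After rotating by $-\theta^*$, the tiles of $T_{\theta^*,\lambda}$ become axis-parallel rectangles of width $\lambda$ and height $\lambda\delta$. Write $b-a=(d_x,d_y)$ in these coordinates. Then $d_x=|ab|\cos(\alpha-\theta^*)>0$ and $|d_y|\le d_x\tan\varphi=d_x\,\delta/(2\mu+4)$.

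The conclusion splits into two independent requirements. One is horizontal: the column indices $\lceil x_a/\lambda\rceil$ and $\lceil x_b/\lambda\rceil$ must differ by exactly $\mu+1$. The other is vertical: $a$ and $b$ must lie in the same row, of either $T$ or $T'$.

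\textbf{Vertical step.} Suppose $\lambda$ satisfies $d_x/\lambda\le\mu+2$. Then $|d_y|\le \lambda\delta(\mu+2)/(2\mu+4)=\lambda\delta/2$, so the two $y$-coordinates differ by at most half a row height. The intended argument is that $T'$ is $T$ shifted vertically by half a row. The two points can be split by a horizontal grid line of $T$ only if no grid line of $T'$ lies between them, and vice versa. Hence in one of the two tilings they lie in the same row $j^*$. I would double-check this against the stated shift vector $(0,\tfrac12)$: for the argument to work, the shift must be $\tfrac{\delta}{2}$ modulo $\delta$ in the unscaled tiling. If it is not, I would read the shift as half a row height. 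Boundary cases, where a point lies on a grid line, are handled by choosing the tile containing it appropriately.

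\textbf{Horizontal step.} The scales satisfy $\gamma^{\mu+2}=3$ and $\gamma<1+1/\mu$. Hence, as $p$ ranges over $\mathbb{Z}$, the ratios $d_x/\gamma^p$ form a geometric sequence in which consecutive terms near $\mu+1$ differ by less than $1$. In particular, some $\lambda=\gamma^p$ gives $d_x/\lambda\in[\mu+1,\mu+2)$. For this $\lambda$ the column difference $f(\lambda)=\lceil x_b/\lambda\rceil-\lceil x_a/\lambda\rceil$ lies in $\{\mu+1,\mu+2\}$, and the vertical step applies.

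\textbf{Main obstacle.} The hard part is forcing $f(\lambda)=\mu+1$ exactly rather than $\mu+2$. I would first examine all scales with $d_x/\lambda\in(\mu,\mu+2)$. This window spans a multiplicative factor $(\mu+2)/\mu$, which corresponds to roughly two consecutive exponents $p$. At each such scale $f\in\{\mu,\mu+1,\mu+2\}$, and $f=\mu+1$ holds exactly when the fractional parts of $x_a/\lambda$ and $d_x/\lambda$ interact suitably. If this direct search cannot be guaranteed, I would exploit the extra freedom still available. The half-shifted tiling $T'$ may be chosen to shift horizontally as well. The antipodal angle $\theta^*+\pi$ also lies in $\Phi$ when $k$ is even, and it turns ceilings into floors. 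One of these choices should realize difference exactly $\mu+1$ while preserving the vertical condition.

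The vertical step is routine given $\tan\varphi=\delta/(2\mu+4)$. The real content of the lemma is this exact-difference bookkeeping for the horizontal indices.
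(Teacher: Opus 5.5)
Your choice of $\theta^*$ and your vertical step are the same as the paper's argument. You are also right that the shift of $T'$ has to be read as half a row, $\delta/2$ rather than $\frac12$, and the paper's proof uses it that way. The proposal breaks in the horizontal step, for two reasons. First, $\gamma<1+1/\mu$ guarantees a scale with $\mu\lambda\le d_x<(\mu+1)\lambda$ (the paper's window), not one with $d_x/\lambda\in[\mu+1,\mu+2)$. For $\mu=9$ we have $\gamma=3^{1/11}\approx 1.105>11/10$, so your window can be skipped. Second, and decisively, you never force the column offset to be exactly $\mu+1$, and neither fallback is available. A rotation by $\pi$ maps the column lines $\lambda\mathbb{Z}$ of the rotated frame onto themselves, so $\theta^*+\pi$ gives exactly the same column offsets; it only changes tie-breaking on boundaries, which closed tiles already allow. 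A horizontally shifted $T'$ is simply not a member of $\mathcal{T}$.

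No argument can close this gap, because the lemma is false as stated. Take $\mu=9$, any $\eps\in(0,1)$, $a=(5.1,0)$ and $b=(14.6,0)$. Suppose $a$ and $b$ lay in tiles with indices $(i,j)$ and $(i+10,j)$ of some $T_{\theta,\lambda}$ or $T'_{\theta,\lambda}$. Then in the rotated frame $|d_y|\le\lambda\delta$ and $|d_x|\ge 9\lambda$. Let $\beta\in[0,\pi/2]$ be the angle between direction $\theta$ and the line $ab$; it is a multiple of $\pi/k$. It satisfies $\tan\beta\le\delta/9<2.9\,\delta/22<\tan(3\pi/k)$, using $3\pi/k>2.9\varphi$. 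Hence $\beta\le 2\pi/k\le 2\varphi$ and $c=\cos\beta>0.999$. The condition $9\lambda\le 9.5c\le 11\lambda$ then leaves only $\lambda\in\{1,\gamma^{-1}\}$, which are precisely the scales in your window $(\mu,\mu+2)$. Since $T'$ has the same column lines as $T$, the offset equals the number of integers strictly between $5.1c/\lambda$ and $14.6c/\lambda$. This is $9$ for $\lambda=1$, and $11$ for $\lambda=\gamma^{-1}$ (endpoints $\approx 5.63$ and $\approx 16.13$), so it is never $10$.

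The paper's proof fails at exactly the step you flagged. The inequality $\mu\lambda^*\le x(b)-x(a)<(\mu+1)\lambda^*$ only places $b$ in column $i^*+\mu$ or $i^*+\mu+1$; here $\lambda^*=1$ and the columns are $6$ and $15$. Consequently, for $S=\{a,b\}$ with $a\in R$, $b\in B$, the construction of \Cref{subsec:construct} adds no edge at all. What your vertical step together with the paper's scale window does prove is the weaker statement with column offset $\mu$ or $\mu+1$. The construction and its stretch analysis would have to be adapted to that version.
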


\begin{proof}
    Let $\psi$ be the counterclockwise angle between the positive $x$-axis and vector $\vec{ab}$. We choose $\theta^* \in \Phi$ such that $\theta^*$ is the closest angle to $\psi$; i.e., $\theta^*$ is the solution to the minimization problem
    \[\theta^* = \argmin_{\theta \in \Phi} \lvert \theta - \psi \rvert.\]
    We may assume (by suitable rotation and reflections, if necessary) that $\theta^* = 0$, $\varphi\geq 0$, and $x(a) < x(b)$, where $x(a)$ and $x(b)$, resp., denote the $x$-coordinate of $a$ and $b$.
    Next, we show that there exists a scale $\lambda^*\in \Lambda$ that satisfies 
    \begin{equation}\label{eq:scale}
    \mu \lambda^* \le x(b)-x(a) < (\mu+1)\lambda^*.
    \end{equation}
    Let $\lambda^* = \gamma^{p^*}$, where 
    \[
    p^* = \left\lfloor \frac{\ln\frac{ x(b)-x(a)}{\mu}}{\ln\gamma} \right\rfloor .
    \]
   This gives $\gamma^{p^*}\leq \frac{ x(b)-x(a)}{\mu}\leq \gamma^{p^*+1}$; and combined with $\gamma=3^{1/(\mu+2)}< 1 + 1/\mu$, it implies \Cref{eq:scale}.
    
    Since $T_{\theta^*, \lambda^*}$ is a tiling of $\mathbb{R}^2$, there exist $i^*,j^*\in \mathbb{Z}$ such that $a \in T_{\theta^*, \lambda^*}(i^*,j^*)$. Now \Cref{eq:scale}
    implies that $x(a)+\mu \lambda^*\leq x(b)\leq x(a)+(\mu + 1)\lambda^*$, and so $b \in T_{\theta^*, \lambda^*}(i^*+\mu +1,j^{**})$ for some $j^{**} \in \mathbb{Z}$.
    Let $y(a)$ and $y(b)$, resp., denote the $y$-coordinate of $a$ and $b$. If $y(a) = y(b)$, then we must have $j^*= j^{**}$. Suppose $y(a) \ne y(b)$, we may assume w.l.o.g.\ that $y(a) < y(b)$ and $\lambda^* = 1$. Under these assumptions, we have $T_{\theta^*, \lambda^*}(i^*,j^*)=T(i^*,j^*)$, which is an axis-aligned rectangle of width 1 and height $\delta$. 
    By our choice of $\varphi$, we can bound $y(b)$ from above, and conclude that 
    \begin{equation}\label{eq:y}
    y(b) - y(a) = (x(b) - x(a)) \tan(\psi) 
    \le (\mu + 2)\tan(\varphi) 
    =  \frac{\delta}{2} .
    \end{equation}
    A horizontal median of the tile $T_{\theta^*, \lambda^*}(i^*,j^*)$ subdivides $T_{\theta^*, \lambda^*}(i^*,j^*)$ into two congruent rectangles, that we call the \emph{upper} and \emph{lower half} of $T_{\theta^*, \lambda^*}(i^*,j^*)$. Then one of the following two cases occurs
    \begin{enumerate}
        \item[Case~1:] $a$ is in the lower half of $T_{\theta^*, \lambda^*}(i^*,j^*)$. In this case, we have $0\leq y(a)\leq \delta/2$, and \Cref{eq:y} gives $0\leq y(b)\leq \delta/2$. Thus $a \in T_{\theta^*, \lambda^*}(i^*,j^*)$ and $b \in T_{\theta^*, \lambda^*}(i^*+\mu + 1,j^*)$.        
        \item[Case~2:] $a$ is in the upper half of $T_{\theta^*, \lambda^*}(i^*,j^*)$. In this case, we have $\delta/2 \leq y(a)\leq \delta$, and \Cref{eq:y} gives $\delta/2 \leq y(b)\leq \frac32 \delta$. Consequently, $a \in T_{\theta^*, \lambda^*}'(i^*,j^*)$ and $b \in T_{\theta^*, \lambda^*}'(i^*+\mu+1,j^*)$.
    \end{enumerate}
    This concludes the proof of \Cref{lemma:spacing}.
\end{proof}

\subsection{Spanner Construction}
\label{subsec:construct}
We are given a set $S = R \cup B$ of $n$ bichromatic points in the plane and a parameter $\eps > 0$.
Let $\mathcal{T}$ be the family of tilings defined in \Cref{subsec:tiling} with the separation constant $\mu = 9$. 
For every nonempty tile $T_{\theta,\lambda}(i,j)$, we denote by $N_{\theta,\lambda}(i,j)$ its \emph{neighborhood} obtained by taking the union of $T_{\theta,\lambda}(i,j)$ and its eight surrounding tiles at the same scale. We say that $N_{\theta,\lambda}(i,j)$ is \emph{bichromatic} if it contains at least one point of each color. Similarly, $N_{\theta,\lambda}(i,j)$ is \emph{monochromatic} if all points in $N_{\theta,\lambda}(i,j)$ have the same color. It is important to note that $N_{\theta,\lambda}(i,j)$ can be bichromatic even if all points in $T_{\theta,\lambda}(i,j)$ are of the same color.
For our construction, we begin by initializing a graph $G = (S,E)$ to be the empty graph. 

For a set $Q\subseteq S$, we say that point $q$ is \emph{leftmost} in $Q\cap N_{\theta,\lambda}(i,j)$ if, after a clockwise rotation of $Q\cap N_{\theta,\lambda}(i,j)$ through $\theta$, its $x$-coordinate is minimal. Likewise, the point $q$ is \emph{rightmost} in $Q\cap N_{\theta,\lambda}(i,j)$ if, after the same rotation, its $x$-coordinate is maximal. Note that several leftmost (resp.,  rightmost) points may exist (which have the same $x$-coordinate after suitable rotation). 
We choose a rightmost red point $r^*$ in $N_{\theta, \lambda}(i,j)$ and a leftmost blue point $b^*$ in $N_{\theta,\lambda}(i+10,j)$.
Similarly, we choose a rightmost blue point $b^{**}$ in $N_{\theta,\lambda}(i,j)$ and a leftmost red point $r^{**}$ in $N_{\theta,\lambda}(i+10,j)$.

Consider every $\theta \in \Phi$, $\lambda \in \Lambda$, and $i,j\in \mathbb{Z}$ such that both $T_{\theta,\lambda}(i,j)$ and $T_{\theta,\lambda}(i+10,j)$ are nonempty. 
We distinguish between four cases based on whether the neighborhoods $N_{\theta,\lambda}(i,j)$ and $N_{\theta,\lambda}(i+10, j)$ are monochromatic or bichromatic:

\begin{enumerate}
    \item[Case~0:] Both $N_{\theta,\lambda}(i,j)$ and $N_{\theta,\lambda}(i+10, j)$ are monochromatic, and all points in $N_{\theta,\lambda}(i,j)\cup N_{\theta,\lambda}(i+10,j)$ have the same color. In this case, we do not add any new edges to $G$.
    \item[Case~1:] Both $N_{\theta,\lambda}(i,j)$ and $N_{\theta,\lambda}(i+10,j)$ are monochromatic, but they contain different colors. Assume w.l.o.g.\ that $N_{\theta,\lambda}(i,j)\cap S\subset R$ and $N_{\theta,\lambda}(i+10,j)\cap S\subset B$  (the other case is symmetric); see \Cref{fig:monomono}. In this case, we add edges to $G$ between $r^*$ and every point in $T_{\theta,\lambda}(i+10,j)\cap B$, between $b^*$ and every point in $T_{\theta,\lambda}(i,j)\cap R$, as well as the edge $r^*b^*$. In particular, the edges added to $G$ form a \emph{bistar} centered at $r^*$ and $b^*$.
    \begin{figure}[tbh]
        \centering
        \includegraphics[width=.9\textwidth]{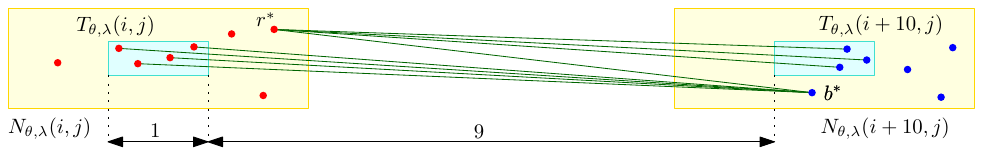}
        \caption{Case~1: Both $N_{\theta,\lambda}(i,j)$ and $N_{\theta,\lambda}(i+10, j)$ are monochromatic}
        \label{fig:monomono}
    \end{figure}
    \item[Case~2:] Exactly one of $N_{\theta,\lambda}(i,j)$ and $N_{\theta,\lambda}(i+10,j)$ is monochromatic. Assume w.l.o.g.\ that $N_{\theta,\lambda}(i,j)$ is monochromatic and $N_{\theta,\lambda}(i+10,j)$ is bichromatic; see \Cref{fig:monobi} (the other case is symmetric). If $N_{\theta,\lambda}(i,j)\cap S\subset R$, we add an edge to $G$ between $b^*$ and every point in $T_{\theta,\lambda}(i, j)\cap R$. Otherwise,  (when $N_{\theta,\lambda}(i,j)\cap S\subset B$), we add an edge to $G$ between $r^{**}$ and every point in $T_{\theta,\lambda}(i, j)\cap B$. In this case, the edges added to $G$ form a \emph{star}.
    \begin{figure}[tbh]
        \centering
        \includegraphics[width=.9\textwidth]{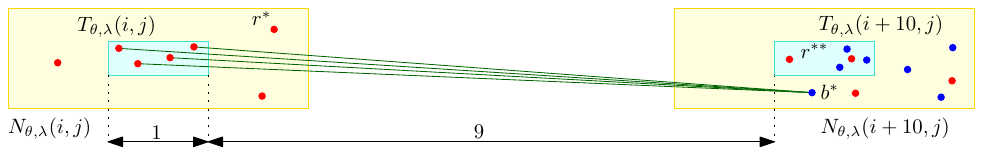}
        \caption{Case~2: $N_{\theta,\lambda}(i,j)$ is monochromatic and $N_{\theta,\lambda}(i+10, j)$ is bichromatic}
        \label{fig:monobi}
    \end{figure}
    \item[Case~3:] Both $N_{\theta,\lambda}(i,j)$ and $N_{\theta,\lambda}(i+10,j)$ are bichromatic; see  \Cref{fig:bibi}. In this case, we add two edges to $G$: the edges $r^*b^*$ and $r^{**}b^{**}$.
    \begin{figure}[tbh]
        \centering
        \includegraphics[width=.9\textwidth]{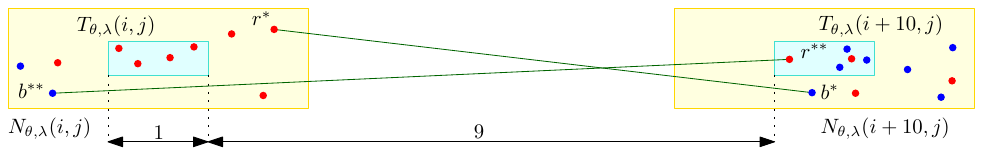}
        \caption{Case~3: Both $N_{\theta,\lambda}(i,j)$ and $N_{\theta,\lambda}(i+10, j)$ are bichromatic}
        \label{fig:bibi}
    \end{figure}
\end{enumerate}

This completes the construction of the bipartite graph $G=(R\cup B,E)$. We show that $G$ is a $(3+\eps)$-spanner of $K(R,B)$ in \Cref{subsec:analysis}, and that it contains $O(n/\sqrt{\eps})$ edges in \Cref{subsec:analysis2}.
We describe how to compute $G$ in $O(\sqrt{1/ \eps}\cdot n\log n)$ time in \Cref{ssec:runtime}.

\subsection{Stretch Analysis}
\label{subsec:analysis}

In this section, we prove that the graph $G$ is a $(3+\eps)$-spanner of $K(R,B)$. We begin with a preliminary observation. The Euclidean length of an edge is given by the Pythagorean theorem for its $x$- and $y$-extents. However, if the $y$-extent is much smaller than the $x$-extent, we can bound the Euclidean length from above by a polynomial as follows: For all $x \ge 1$ and $y > 0$, we have 
\begin{equation}\label{eq:P}
\sqrt{x^2 + y^2} \le x + y^2/2.
\end{equation}
We can now prove the main result of this subsection. 
\begin{lemma}\label{lem:stretch}
For every finite bichromatic set $S=R\cup B$ in the plane, the graph $G\subseteq K(R,B)$ constructed in \Cref{subsec:construct} is $(3+\eps)$-spanner for $K(R,B)$.
\end{lemma}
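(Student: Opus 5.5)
We prove the claim by induction on the length $|ab|$ of the red--blue pairs $(a,b)\in R\times B$, ordered by distance. The inductive hypothesis is that every bichromatic pair $(a',b')$ with $|a'b'|<|ab|$ satisfies $\delta_G(a',b')\le (3+\eps)|a'b'|$. The base case is the closest bichromatic pair, which is handled by the same argument below: every recursive call there concerns a strictly shorter bichromatic pair, and no such pair exists.

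\textbf{Setting up the tiles.} Fix $a\in R$ and $b\in B$. Apply \Cref{lemma:spacing} with $\mu=9$ to obtain a tiling, which after rotation and scaling we may take to be $T$ or $T'$ with $\lambda^*=1$, together with indices such that $a\in T(i,j)$ and $b\in T(i+10,j)$ (or the symmetric situation). By \Cref{eq:scale}, $9\le x(b)-x(a)<10$, so $|ab|\ge 9$. Each neighborhood $N(i,j)$ and $N(i+10,j)$ has width $3$ and height $3\delta$. Consequently every point $p$ in either neighborhood is within distance $O(1)$ of $a$ or $b$ respectively, and its vertical offset is at most $O(\delta)$.

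\textbf{Case analysis.} We follow the construction's cases.
\begin{itemize}
\item Case~0 cannot occur, since $a$ and $b$ have different colors.
\item In Case~1, $a$ is adjacent to $b^*$, $b$ is adjacent to $r^*$, and $r^*b^*\in E$. The path $a\,b^*\,r^*\,b$ has length at most $|ab|$ plus a few $x$-offsets that are bounded because $r^*$ is rightmost and $b^*$ is leftmost. More precisely, let $x$-projections measure lengths. Then $|ab^*|$ and $|r^*b|$ each have $x$-extent at most $x(b)-x(a)+O(1)$, and $|r^*b^*|\le x(b)-x(a)$ in $x$-extent. This is where the rightmost/leftmost choice pays off: $x(r^*)\ge x(a)$ is false in general, because $r^*$ may lie in a neighboring tile to the left. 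I therefore use that $x(r^*)\ge x(a)-1$ and $x(b^*)\le x(b)+1$ and absorb these into the slack. The $y$-extents are $O(\delta)$, so \Cref{eq:P} turns each Euclidean length into its $x$-extent plus $O(\delta^2)=O(\eps)$. The total is $3(x(b)-x(a))+O(1)+O(\eps)$, and comparing with $|ab|\ge x(b)-x(a)\ge 9$ gives a ratio of $3+O(\eps)$ only if the additive $O(1)$ terms vanish. So the real bound must come from the monotonicity: the $x$-extents should telescope to at most $3(x(b)-x(a))$ exactly, up to the vertical corrections. I expect this bookkeeping to be the delicate part, and $\delta=\sqrt{\eps/7}$ is tuned so that the accumulated $y^2/2$ terms sum to at most $\eps\,|ab|$.
\item In Case~2, say $N(i,j)$ is red. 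Then $a$ is adjacent to $b^*$, a leftmost blue point of the bichromatic $N(i+10,j)$. Since $|b^*b|$ is not an edge of a bipartite graph, I instead route through a red point $r'$ of $N(i+10,j)$: the pair $(r',b)$ has length $O(1)\ll |ab|$, and the induction gives $\delta_G(r',b)\le(3+\eps)|r'b|$. The remaining step from $b^*$ to $r'$ is again a short bichromatic pair handled by induction. The resulting bound is $|ab^*|+(3+\eps)(|b^*r'|+|r'b|)$, which is too large unless $r'$ and $b^*$ are chosen cleverly. The better choice is to route $b^*\to b$ directly: the pair is blue--blue, so it has no direct spanner guarantee. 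The right move is therefore to pick $r'=r^{**}$ (leftmost red) and use the edge $ab^*$ followed by an induction on a short pair. This needs the neighborhood diameter, $O(1)$, to be at most an $O(\eps)$ fraction of $|ab|$, which fails since $|ab|\approx 9$. Hence I would instead exploit that the $(3+\eps)$ stretch is relative: a 3-hop path $a\to b^*\to r\to b$ with $|b^*r|,|rb|$ short.
\item Case~3 is analogous, using the edges $r^*b^*$ and $r^{**}b^{**}$ together with induction on the short pairs $(a,b^{**})$ or $(a,b^{*})$ and $(r^*,b)$ within the neighborhoods.
\end{itemize}

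\textbf{Main obstacle.} The hardest step is Cases~2--3, which need an inductive inequality of the form $\delta_G(a,b)\le |ab^*| + (3+\eps)(\text{short distances})\le(3+\eps)|ab|$. This is where the separation constant $\mu=9$ and the $3\times$ neighborhood enter. I would first write every term in rotated $x$-coordinates and use that $b^*$ is leftmost, so that $x(b^*)\le x(b)$, and that the inductive short pair lies entirely to one side. The goal is for the $x$-extents to telescope, leaving only $O(\delta^2)$ vertical error per edge, which sums to at most $\eps|ab|$.
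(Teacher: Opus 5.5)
Your outline (strong induction on $|ab|$, \Cref{lemma:spacing} giving $9\le x(b)-x(a)<10$, and a case split following the construction) is the right skeleton. However, none of the three cases is actually closed, and at the decisive steps you assert obstructions that do not exist.

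In Case~1 you claim that $x(r^*)\ge x(a)$ fails in general. It does not. The point $a$ is itself a red point of $T(i,j)\subseteq N(i,j)$, and $r^*$ is a rightmost red point of $N(i,j)$, so $x(a)\le x(r^*)$. Symmetrically, $x(b^*)\le x(b)$ because $b$ is a blue point of $N(i+10,j)$. Hence $x(a)\le x(r^*)<x(b^*)\le x(b)$. Each of the three edges $ab^*$, $b^*r^*$, $r^*b$ therefore has $x$-extent at most $x(b)-x(a)\le |ab|$ and $y$-extent at most $3\delta$. Then \Cref{eq:P} bounds each edge by $|ab|+9\delta^2/2=|ab|+9\eps/14\le(1+\eps/3)|ab|$, using $|ab|\ge 9$. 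That is all of Case~1. Your weakened bounds $x(r^*)\ge x(a)-1$ and $x(b^*)\le x(b)+1$ produce exactly the additive $O(1)$ error you yourself recognize as fatal, and the proposal never removes it.

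In Cases~2--3 you misidentify where the slack comes from. You say the bound $|ab^*|+(3+\eps)(|b^*\tilde r|+|\tilde r b|)$ is too large because the neighborhood diameter is not an $O(\eps)$-fraction of $|ab|$. But the target stretch is $3$, not $1+\eps$, so beyond the crossing edge there is an additive budget of roughly $2|ab|\ge 18$, and the inductive detours only need to fit into it.

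Concretely, in Case~2, for any red $\tilde r\in N(i+10,j)$:
\begin{itemize}
\item $|b^*\tilde r|\le 3\diam(T(i,j))$ and $|\tilde r b|\le 2\diam(T(i,j))$, with $\diam(T(i,j))\le 1+\eps/14$.
\item Both are shorter than $9\le |ab|$, so induction applies to them.
\item $|ab^*|<10+2\delta^2$.
\end{itemize}
The total is $25+O(\eps)<(3+\eps)\cdot 9\le(3+\eps)|ab|$. This crude constant comparison is the entire argument, and it is what $\mu=9$ is calibrated for. The ``telescoping of $x$-extents with $O(\delta^2)$ error per edge'' that you propose for these cases cannot work, because the paths supplied by the induction hypothesis are arbitrary, not $x$-monotone.

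Your Case~3 sketch also picks the wrong pairs. The pairs $(a,b^*)$ and $(r^*,b)$ straddle the two far-apart tiles (their $x$-extent is at least $8$). Invoking the induction hypothesis on them consumes essentially the whole budget $(3+\eps)|ab|$. The edge $r^*b^*$ is also useless here, because $a$ and $r^*$ have the same color. The correct route is:
\begin{itemize}
\item go from $a$ to $b^{**}$ by induction, using $|ab^{**}|\le 2\diam(T(i,j))$;
\item take the edge $b^{**}r^{**}$, of length at most $\sqrt{13^2+9\delta^2}$;
\item go from $r^{**}$ to $b$ by induction, using $|r^{**}b|\le 2\diam(T(i,j))$.
\end{itemize}
The total is $25+O(\eps)<27+9\eps\le(3+\eps)|ab|$.
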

\begin{proof}
    We proceed by induction on the length of the bichromatic edges in $K(R,B)$. 
    Specifically, let $D=\{ |rb|: (r,b)\in R\times B\}$ be the set of distances between red-blue pairs in $S$; and let $0=d_0<d_1<\ldots <d_m$ be elements of $\{0\}\cup D$ in increasing order. 
    
    We prove, by induction on $\ell=0,1,\ldots , m$, that for all $(r,b)\in R\times B$ with $|rb|\leq d_\ell$, the graph $G$ contains an $rb$-path of length at most $(3+\eps)|rb|$. In the base case, where $\ell=0$ and $d_\ell=0$, the claim vacuously holds, as there is no red-blue pair $(r,b)$ with $|rb|=0$. 

    In the induction step, let $d_\ell\in D$ and assume that for all $(r',b')\in R\times B$, if $|r'b'|<d_\ell$, then $G$ contains an $r'b'$-path of length at most $(3+\eps)|r'b'|$.  
    Let $(r,b)\in R\times B$ with $|rb|=d_\ell$. By Lemma~\ref{lemma:spacing}, there exist $\theta\in \Phi$, $\lambda \in \Lambda$ and $i,j \in \mathbb{Z}$ such that $T_{\theta,\lambda}(i,j)$ and $T_{\theta,\lambda}(i+10,j)$ each contain one of $r$ and $b$; or $T_{\theta,\lambda}'(i,j)$ and $T_{\theta,\lambda}'(i+10,j)$ each contain one of $r$ and $b$. 
   To simplify our argument, we may assume w.l.o.g. that $\theta = 0$, $\lambda = 1$, $r\in T_{\theta,\lambda}(i,j)$ and $b \in T_{\theta,\lambda}(i+10,j)$. Under these assumptions, $N_{\theta, \lambda}(i,j) = N(i,j)$, which is an axis-aligned rectangle of width $3$ and height $3\delta$.  We distinguish between three cases:
    \begin{enumerate}
        \item [Case~1:] Both $N(i,j)$ and $N(i+10,j)$ are monochromatic; see \Cref{fig:monomono}.
        By construction (Case 1), $G$ contains the $rb$-path $P = (r,b^*,r^*,b)$. Since the pair $r^*$ is the rightmost red point in $N(i,j)$ and $b^*$ is the leftmost blue point in $N(i+10,j)$, then we have $x(r) \le x(r^*) < x(b^*) \le x(b)$, where $x(r)$, $x(r^*)$, $x(b^*)$ and $x(b)$, resp., denote the $x$-coordinate of $r$, $r^*$, $b^*$ and $b$.
        Then, the length of $b^*r^*$ can be bounded above by $|b^*r^*| \le \sqrt{(x(b^*) - x(r^*))^2 + (3\delta)^2} \le \sqrt{(x(b) - x(r))^2 + (3\delta)^2}$. Using \Cref{eq:P}
        and $9 \le |rb|$, it follows that 
        \begin{align}\label{eq:case1}
            |b^*r^*| &\le \sqrt{(x(b) - x(r))^2 + (3\delta)^2} \\
            &\le x(b) - x(r) + 9\delta^2 / 2 \nonumber\\
            &= x(b) - x(r) + 9\eps / 14  \nonumber\\
            &\le (1 + \eps / 3)|rb|. \nonumber
        \end{align}
        By a similar process, we obtain $|rb^*| \le (1 + \eps/3)|rb|$ and $|r^*b| \le (1 + \eps/3)|rb|$. Thus, the length of the $rb$-path $P$ is at most $(3+\eps)|rb|$ since $|P| = |rb^*| + |b^*r^*| + |r^*b| \le 3(1+\eps / 3)|rb| = (3+\eps)|rb|$.
        (Note that we did not use the induction hypothesis in Case~1.)
        \item [Case~2:] One of $N(i,j)$ and $N(i+10,j)$ is monochromatic and the other is bichromatic. Assume w.l.o.g.\ that $N(i,j)$ is monochromatic and $N(i+10,j)$ is bichromatic; see \Cref{fig:stretch1} (the other case is symmetric). Let $\tilde{r} \in R\cap N(i+10,j)$. Since $b^*,\tilde{r}\in N(i+10,j)$, we have $|b^*\tilde{r}|\leq \diam(N(i+10,j))=3\diam(T(i,j))$. Furthermore, since $b\in T(i+10,j)$, then the edge $\tilde{r}b$ lies in an axis-aligned rectangle spanned by $T(i+10,j)$ and an adjacent tile, which implies $|\tilde{r}b|\leq 2\diam(T(i,j))$. Note that $\diam(T(i,j))=\sqrt{1+\delta^2}$ by the Pythagorean Theorem. \Cref{eq:P} yields $\diam (T(i,j)) = \sqrt{1+\delta^2} \le 1 + \eps / 14$.
        Since $9 \le |rb|$, it now follows that $|b^* \tilde{r}| \le 3\diam(T(i,j)) \le 3(1 + \eps / 14) < 9 \leq  |rb|$. Similarly, we have $|\tilde{r}b| < |rb|$.
   \begin{figure}[tbh]
        \centering
        \includegraphics[width=.95\textwidth]{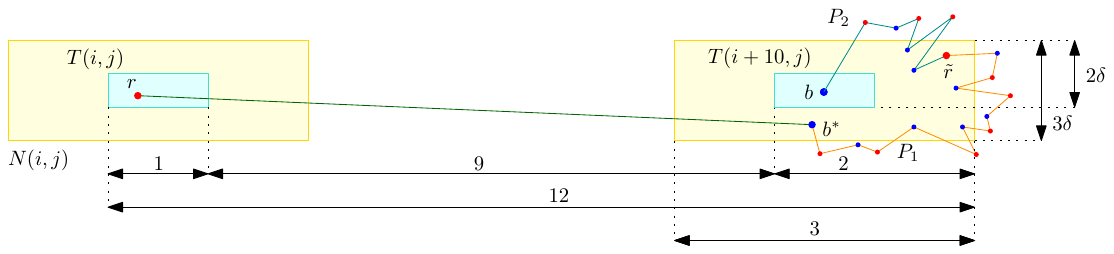}
        \caption{An $rb$-path when $N(i,j)$ is monochromatic and $N(i+10, j)$ is bichromatic}
        \label{fig:stretch1}
    \end{figure}
    
        By the induction hypothesis, 
        $G$ contains some paths
        \[P_1 = (b^*,\ldots,\tilde{r}) \quad\text{and}\quad P_2 = (\tilde{r},\ldots,b),\]
        resp., of length at most $(3+\eps)|b^*\tilde{r}|$ and $(3+\eps)|\tilde{r}b|$. Then, the concatenation of $rb^* \oplus P_1 \oplus P_2$ is an $rb$-path in $G$ and its length can be bounded above as follows
        \begin{align}\label{eq:case2}
            |rb^*| + |P_1| + |P_2| &\le \sqrt{11^2 + (2\delta)^2} + (3+\eps)|b^*\tilde{r}| + (3+\eps)|\tilde{r}b| \\
            &\le 11 + 2\delta^2 + 3(3+\eps)\diam(T(i,j)) + 2(3+\eps)\diam(T(i,j)) \nonumber\\
            &= 11 + 2\eps / 7+ 5(3+\eps)(1+\eps / 14) \nonumber\\
            &= 11 + 2\eps / 7 + 5(3 + 17\eps / 14 + \eps^2 / 14).\nonumber
        \end{align}
        Recall that $0< \eps < 1$, hence $\eps^2 < \eps$. It follows that
        \begin{align*}
            |rb^*| + |P_1| + |P_2| &\le 11 + 2\eps / 7 + 5(3 + 90\eps /14) \\ 
            &= 26 + 94 \eps /14\\
            &< 9(3+\eps) \\
            &\le (3+\eps)|rb|.
        \end{align*}
        So, the concatenation of $rb^* \oplus P_1 \oplus P_2$ is an $rb$-path in $G$ of length at most $(3+\eps)|rb|$.
    \begin{figure}[tbh]
        \centering
        \includegraphics[width=.95\textwidth]{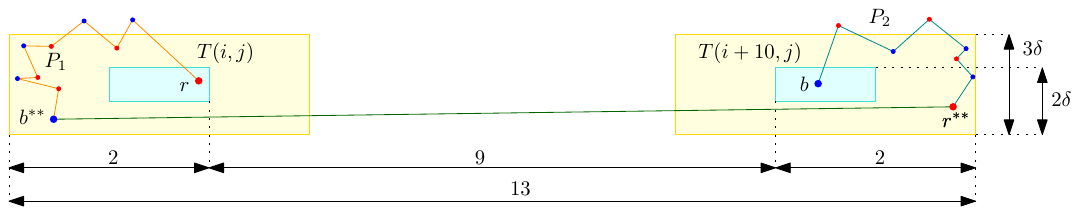}
        \caption{An $rb$-path when both $N(i,j)$ and $N(i+10, j)$ are bichromatic}
        \label{fig:stretch2}
    \end{figure}
        \item [Case~3:] Both $N(i,j)$ and $N(i+10,j)$ are bichromatic; see \Cref{fig:stretch2}. 
        Similarly to Case~2, we have $|rb^{**}|\leq \diam N(i,j)< |rb|$ and  $|r^{**}b|\leq \diam N(i+10,j)< |rb|$. By the induction hypothesis, $G$ contains some paths
        \[P_1 = (r,\ldots,b^{**}) \quad\text{and}\quad P_2 = (r^{**},\ldots,b),\]
        resp., of length at most $(3+\eps)|rb^{**}|$ and $(3+\eps)|r^{**}b|$. Then, the concatenation $P_1 \oplus b^{**}r^{**} \oplus P_2$ is an $rb$-path of $G$, and observe that by a similar process as in Case 2, its length be bounded above as follows
        \begin{align}\label{eq:case3}
            |P_1| + |b^{**}r^{**}| + |P_2| \
            &\le (3+\eps)|rb^{**}| + \diam(N(i,j) \cup N(i+10,j))+ (3+\eps)|r^{**}b|\\
            &\le 2(3+\eps) \diam(T(i,j)) + \sqrt{13^2 + (3\delta)^2} + 2(3+\eps) \diam(T(i+10,j)) \nonumber\\
            &\le 4(3+\eps)(1+ \eps / 14) + 13 + 9\delta^2 / 2\nonumber\\
            &\le 25 + 81 \eps / 14\nonumber\\
            &< 9(3+\eps) \nonumber\\
            &\le (3+\eps)|rb|.\nonumber
        \end{align}
        So, the concatenation $P_1 \oplus b^{**}r^{**} \oplus P_2$ is an $rb$-path in $G$ of length at most $(3+\eps)|rb|$.
    \end{enumerate}
    Overall, the bipartite graph $G$ is a $(3+\eps)$-spanner of $K(R,B)$.    
\end{proof}

\subsection{Sparsity Analysis}
\label{subsec:analysis2}

Recall that we used tilings with separation constant $\mu=9$ for our construction for chromatic spanner $G$. Fix an angle $\theta\in \Phi$. We distinguish between two types of edges in $G$.
An edge is of \emph{Type~1} if it was added when the construction processed  a pair of tiles $T_{\theta,\lambda}(i,j), T_{\theta,\lambda}(i+10,j)$ where one of the tiles had monochromatic neighborhood; otherwise it is of \emph{Type~2}. In particular, edges of Type~1 correspond to Cases~1-2 of the construction, and Type~2 to Case~3.

\paragraph{Overview.}
We count edges of each type separately in \Cref{lem:type1,lem:type2} below. 
Edges of Type~1 are the main concern, since our spanner may contain many edges of Type~1 for each pair 
$T_{\theta,\lambda}(i,j), T_{\theta,\lambda}(i+10,j)$. However, even if the neighborhood of one of the tiles is monochromatic, their union $N_{\theta,\lambda}(i,j)\cup N_{\theta,\lambda}(i+10,j)$ contains both colors. We show (\Cref{lem:type1-pre}) that if we go up $26$ scales higher, then every point in $T_{\theta,\lambda}(i,j)\cup T_{\theta,\lambda}(i+10,j)$ is contained in a bichromatic neighborhood: It then follows that each point $s\in S$ is a leaf in $O(1)$ stars and bistars of Type~1, hence there are $O(n)$ edges of Type~1 (for each $\theta\in \Phi$). 

For every pair $T_{\theta,\lambda}(i,j), T_{\theta,\lambda}(i+10,j)$ with bichromatic neighborhoods, we add only two edges of Type~2, so it is enough to analyze the number of such pairs. We rely on the classic result that a quadtree-based WSPD has $m=O(n)$ well-separated pairs~\cite[Theorem~4.1.4]{HarPeled11}. The proof of this result hinges on properties of quadtrees and compressed quadtrees: The compressed quadtree for $n$ points  is known to have $O(n)$ nodes, but chains of single-child nodes in the uncompressed quadtree must be handled carefully. We adapt this approach to our pairs of tiles (\Cref{lem:type2-pre}). 

It is easy to arrange the tilings $T_{\theta,\lambda}$, $\lambda\in \Lambda$, into groups in which the scales differ by integer factors---so we can obtain a laminar system, which supports a quadtree. However, for tiles in a standard quadtree, the translated tiles $T'_{\theta,\lambda}$ would not be laminar. For this reason, we use a variation of quadtrees: nonatrees~\cite{BhoreNTW26}, in which each tile is subdivided into nine congruent subtiles. Using nonatrees, the tiles in $T_{\theta,\lambda}$ and $T'_{\theta,\lambda}$, over all $\lambda\in \Lambda$, can be grouped into laminar systems.  

\paragraph{Edges of Type~1.}
We introduce a charging scheme: We charge each edge $st$ of Type~1 to one of its vertices as follows. Assume $st$ was added to $G$ when processing the pair tiles $T_{\theta,\lambda}(i,j)$ and $T_{\theta,\lambda}(i+10,j)$. If $N_{\theta,\lambda}(i,j)$ is monochromatic, we charge $st$ to $s$, otherwise (when $N_{\theta,\lambda}(i+10,j)$
is monochromatic), we charge $st$ to $t$. 

\begin{lemma}\label{lem:type1-pre}
Let $s\in S$. Assume that $s$ lies in tiles $T_{\theta,\lambda}(i,j)$ and $T_{\theta,\lambda'}(i',j')$ with monochromatic neighborhoods, where $\lambda = \gamma^p,\lambda' = \gamma^{p'}$ with $p<p'$, and our construction added an edge $st$ of Type~1 when processing the pair of tiles $T_{\theta,\lambda}(i,j)$ and $T_{\theta,\lambda}(i+10,j)$.
Then we have $p'<p+26$.
\end{lemma}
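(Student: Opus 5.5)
We argue by contradiction: assume $p'\ge p+26$, and show that the neighborhood $N_{\theta,\lambda'}(i',j')$ then contains points of both colors, contradicting the hypothesis that it is monochromatic. The idea is that the Type~1 edge $st$ certifies a point of the opposite color near $s$ at scale $\lambda$. Once the scale has grown by the factor $\gamma^{26}$, the neighborhood of the tile containing $s$ at scale $\lambda'$ is large enough to swallow that point.

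\textbf{Step 1: an opposite-colored point near $s$.} Since $st$ is a Type~1 edge charged to $s$, the construction processed the pair $T_{\theta,\lambda}(i,j)$, $T_{\theta,\lambda}(i+10,j)$ in Case~1 or Case~2, with $s\in T_{\theta,\lambda}(i,j)$ and $N_{\theta,\lambda}(i,j)$ monochromatic. In both cases the other endpoint $t$ lies in $N_{\theta,\lambda}(i+10,j)$ and has the color opposite to $s$:
\begin{itemize}
\item in Case~1, $t$ is $b^*$ or $r^*$;
\item in Case~2, $t$ is $b^*$ or $r^{**}$.
\end{itemize}
After rotating by $-\theta$ and rescaling so that $\lambda=1$, the tiles have width $1$ and height $\delta$. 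Then $s$ and $t$ satisfy
\[
|x(t)-x(s)|\le 12 \quad\text{and}\quad |y(t)-y(s)|\le 2\delta .
\]
Indeed, $s$ is in column $i$ and $t$ is in columns $i+9,\dots,i+11$, so the horizontal offset is less than $12$; vertically, $t$ is within one tile row of row $j$, which contains $s$.

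\textbf{Step 2: the large tile sees $t$.} At scale $\lambda'=\gamma^{p'}\ge\gamma^{26}$, with the same angle $\theta$, the tiles of $T_{\theta,\lambda'}$ are axis-aligned in the rotated frame. They have width $\gamma^{p'-p}$ and height $\delta\gamma^{p'-p}$. We have $\gamma^{26}=3^{26/11}>3^2=9$, since $\mu=9$ gives $\gamma=3^{1/11}$; in fact $3^{26/11}\approx 13.4$.

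\begin{itemize}
\item \emph{Horizontal direction.} Because $s\in T_{\theta,\lambda'}(i',j')$, the neighborhood $N_{\theta,\lambda'}(i',j')$ extends at least one full tile width $\gamma^{p'-p}$ on each side of $s$. So it covers horizontal offset up to $12$ provided $\gamma^{p'-p}\ge 12$, which holds since $3^{26/11}>12$.
\item \emph{Vertical direction.} The neighborhood extends at least $\delta\gamma^{p'-p}\ge 2\delta$ above and below $s$.
\end{itemize}

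Hence $t\in N_{\theta,\lambda'}(i',j')$, and this neighborhood also contains $s$, which has the other color. So $N_{\theta,\lambda'}(i',j')$ is bichromatic, a contradiction, and therefore $p'<p+26$.

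\textbf{Expected obstacle.} The main care is in the bookkeeping. First, one must check the exact extent of the neighborhoods (the $3\times 3$ blocks of tiles) so that the constant $12$ is correct. Second, the translated tilings $T'$ shift rows by half a tile height, which could cost an extra factor; this is presumably why $26$ is chosen generously rather than the tight exponent. The tile $T_{\theta,\lambda'}(i',j')$ may belong to either $T$ or $T'$. This does not matter for the argument, because we only use that $s$ lies in the middle tile of its neighborhood, and that holds for both tilings.
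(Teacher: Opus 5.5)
Your proof is correct and follows the same argument as the paper. The Type~1 edge supplies an opposite-colored point $t\in N_{\theta,\lambda}(i+10,j)$ lying in a box of size $O(\gamma^p)\times O(\delta\gamma^p)$ around $s$. The neighborhood $N_{\theta,\lambda'}(i',j')$ contains the box of half-dimensions $\gamma^{p'}\times\delta\gamma^{p'}$ centered at $s$, so once $p'\ge p+26$ it swallows $t$ and becomes bichromatic, a contradiction.

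Your constants, $12$ horizontally and $2\delta$ vertically, are the accurate ones. The paper writes $13\gamma^p$ and $\delta\gamma^p$, and the latter overlooks the adjacent rows of $N_{\theta,\lambda}(i+10,j)$; this is harmless since $\gamma^{26}>2$. Your list of possible $t$ in Case~1 omits the leaves in $T_{\theta,\lambda}(i+10,j)$ when $s=r^*$. That is immaterial, because you only use that $t\in N_{\theta,\lambda}(i+10,j)$ has the color opposite to $s$.
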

\begin{proof}
   Assume w.l.o.g.\ that $\theta=0$ and $s$ is red (i.e., $s\in R)$. Since we added an edge $st$ for some $t\in N_{\theta,\lambda}(i+10,j)$, then $t$ is blue. 
   Note that 
\begin{align*}
  |x(s)-x(t)|&\leq |(i-1)\gamma^p - (i+12)\gamma^p| 
   =13\gamma^p
   <3^{26/11}\, \gamma^p = \gamma^{p+26},\\
|y(s)-y(t)|&\leq \delta\gamma^p.
\end{align*}
Recall that $s\in T_{\theta,\lambda'}(i,'j')$. For any $q\in \mathbb{R}^2$ such that $|x(s)-x(q)|\leq \gamma^{p'}$ and $|y(s)-y(q)|\leq \delta\cdot \gamma^{p'}$, we have $q\in N_{\theta,\lambda'}(i,'j')$. Hence $p'\geq p+26$ implies $t\in N_{\theta,\lambda'}(i,'j')$, and so the neighborhood $N_{\theta,\lambda'}(i,'j')$ is bichromatic, and $st$ cannot be charged to $s$. 
 We conclude that $p'<p+26$, as claimed. 
\end{proof}

\begin{lemma}\label{lem:type1}
For every bichromatic set $S=R\cup B$ of $n$ points in the plane, and for every $\theta\in \Phi$, the graph $G$ has $O(n)$ edges of Type~1.
\end{lemma}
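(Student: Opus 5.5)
We fix $\theta\in\Phi$ and use the charging scheme above: each Type~1 edge $st$ is charged to an endpoint $s$ that lies in a tile $T_{\theta,\lambda}(i,j)$ with monochromatic neighborhood. In Cases~1 and~2 every edge is incident to a leaf that lies in the monochromatic tile's cell, and is charged to that leaf. The only exception is the center edge $r^*b^*$ of a bistar, and there is one such edge per Case~1 pair; we charge it to the same vertex as some leaf edge of that bistar, so it at most doubles the count. It therefore suffices to show that every $s\in S$ receives $O(1)$ charges for the fixed $\theta$.

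The first step is to count charges to $s$ within a single scale and a single tiling family ($T_{\theta,\lambda}$ or $T'_{\theta,\lambda}$). The point $s$ lies in $O(1)$ tiles of $T_{\theta,\lambda}$: exactly one, up to boundary ties, and at most four. For each such tile $T_{\theta,\lambda}(i,j)$ there are only two partners, namely $T_{\theta,\lambda}(i+10,j)$ and $T_{\theta,\lambda}(i-10,j)$. Each such pair contributes at most one leaf edge at $s$: the edge to $b^*$ or to $r^{**}$ in Case~2, and the edge to the opposite center in Case~1. Hence $s$ is charged $O(1)$ times per scale $\lambda$ and per choice of $T$ versus $T'$.

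The second step bounds the number of scales at which $s$ is charged. Let $p_0$ be the smallest exponent such that $s$ is charged an edge at scale $\gamma^{p_0}$. \Cref{lem:type1-pre} then shows that every tile containing $s$ at any scale $\gamma^{p'}$ with $p'\ge p_0+26$ has a bichromatic neighborhood. A charge at scale $\gamma^{p'}$ requires $s$ to lie in a tile with monochromatic neighborhood at that scale. Therefore all charges to $s$ occur at scales $p\in[p_0,p_0+25]$, that is, at $26$ scales.

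We must check that \Cref{lem:type1-pre} applies to all the tilings involved, in particular when the tile at the higher scale belongs to $T'$ rather than $T$. The proof of that lemma only uses the fact that every tile at scale $\gamma^{p'}$ containing $s$ has a neighborhood containing the box $|x-x(s)|\le\gamma^{p'}$, $|y-y(s)|\le\delta\gamma^{p'}$. This property holds equally for $T'_{\theta,\lambda'}$, and it holds whether the lower-scale edge came from a $T$ pair or a $T'$ pair. I expect this verification to be the main, though minor, obstacle.

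Combining the two steps, $s$ receives at most $26\cdot 2\cdot O(1)=O(1)$ charges. Summing over all $n$ points gives $O(n)$ Type~1 edges for the fixed $\theta$.
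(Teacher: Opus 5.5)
Your proof follows the paper's argument: charge each Type~1 edge to a vertex, then use \Cref{lem:type1-pre} to confine the scales at which a vertex is charged to a window of constant length. You also supply details the paper leaves implicit (the $O(1)$ charges per scale, the bistar center edge $r^*b^*$, and the translated tilings $T'$), and your rule of charging each edge to its leaf endpoint is the right reading of the scheme, since charging literally to the endpoint on the monochromatic side would, in Case~1, load all edges $r^*t$ onto the center $r^*$.
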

\begin{proof}
    Each vertex $s\in S$ receives charges from at most 28 scales by \Cref{lem:type1-pre}. Consequently, there are $O(|S|)=O(n)$ edges of Type~1.
\end{proof}

\paragraph{Edges of Type~2.} 
For the purpose of analyzing the edges of Type~2, we arrange the nonempty tiles $T_{\theta,\lambda}(i,j)$ into $11$ nonatrees as follows. For $\sigma=0,1,\ldots , 10$, let 
    \[
    \Lambda_\sigma=\{\gamma^\sigma\cdot 3^p: p\in \mathbb{Z}\}
    = \left\{3^{\sigma/11}\cdot 3^p: p\in \mathbb{Z}\right\}.
    \]
Then the set of tiles $\{T_{\theta,\lambda}(i,j): \lambda\in \Lambda_\sigma\}$ is a laminar set system (i.e., any two tiles are either interior-disjoint or one contains the other). Importantly, the translated tiles $\{T'_{\theta,\lambda}(i,j): \lambda\in \Lambda_\sigma\}$ also form a laminar system~\cite{BhoreNTW26} (this is why we use nonatrees, instead of quadtrees). 
Let $\mathcal{Q}_\sigma$ denote the set of tiles $T_{\theta,\lambda}(i,j)$ such that $\lambda\in \Lambda_\sigma$ and $T_{\theta,\lambda}(i,j)\cap S\neq \emptyset$ (the case of the tiles $T'_{\theta,\lambda}(i,j)$ is analoguous). Containment among the tiles in $\mathcal{Q}_\sigma$ defines a hierarchy (i.e., poset or tree); and we refer to $\mathcal{Q}_\sigma$ as a \emph{nonatree}. We can now define a parent-child relationship between the tiles in $\mathcal{T}_\sigma$:  For $T_{\theta,\lambda}(i,j),T_{\theta,\lambda'}(i',j')\in \mathcal{T}_\sigma$, we say that $T_{\theta,\lambda}(i,j)$ is a \emph{child} of $T_{\theta,\lambda'}(i',j')$ if $T_{\theta,\lambda}(i,j)\subset T_{\theta,\lambda'}(i',j')$ and $\lambda'=3\lambda$.
It is clear from the definition that every tile has at most nine children. 
Let $\mathcal{R}_\sigma$ be the set of tiles $T\in \mathcal{Q}_\sigma$ that have two or more children. Containment among the tiles in $\mathcal{R}_\sigma$ gives a partition tree on $S$ for every $\sigma\in \{0,\ldots , 10\}$, consequently 
\begin{equation}\label{eq:partition}
|\mathcal{R}_\sigma|\leq n.
\end{equation}
Note that every tile $T\in \mathcal{Q}_\sigma$ is contained in some tile $\widehat{T}\in \mathcal{R}_\sigma$; possibly $\widehat{T}=T$. For every $T\in \mathcal{Q}_\sigma$, let $\overline{p}(T)$ be the smallest tile in $\mathcal{R}_\sigma$ that contains $T$ (i.e., $\overline{p}(T)$ is the closest ancestor of $T$ in $\mathcal{R}_\sigma$). 

We can now introduce a charging scheme: Consider the edges $r^*b^*$ and $r^{**}b^{**}$ of Type~2 that were added to $G$ when we processed the pair of tiles $\{T_1,T_2\}=\{T_{\theta,\lambda}(i,j), T_{\theta,\lambda}(i+10,j)\}$ with bichromatic neighborhoods. Let $\lambda_1$ and $\lambda_2$ be the scales of $\overline{p}(T_1)$ and $\overline{p}(T_2)$, respectively. If $\lambda_1\leq \lambda_2$, then we \emph{charge} both edges to $\overline{p}(T_1)$; otherwise (if $\lambda_1>\lambda_2)$, we charge them to $\overline{p}(T_2)$. 

\begin{lemma}\label{lem:type2-pre}
In the charging scheme above, every tile $T\in \bigcup_{\sigma=0}^{10}\mathcal{R}_\sigma$ receives $O(1)$ charges. 
\end{lemma}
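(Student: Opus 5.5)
Fix $\sigma$ and a tile $\widehat T\in\mathcal{R}_\sigma$ at scale $\widehat\lambda$, and bound the number of pairs $\{T_1,T_2\}$ charged to it. Suppose the pair sits at scale $\lambda$, so $T_1,T_2$ are at distance about $10\lambda$, and $\widehat T=\overline{p}(T_1)$ with $\lambda_1\le\lambda_2$. We show both $\lambda/\widehat\lambda$ and the relative position of $T_1$ inside $\widehat T$ are confined to $O(1)$ choices, and that $T_1$ determines $T_2=T_1$ shifted by $10$ columns. Since the scales in $\Lambda_\sigma$ are powers of $3$ times $\gamma^\sigma$, and $\sigma$ and the primed/unprimed choice give only $22$ hierarchies per angle, the charge per tile is $O(1)$.

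\textbf{Upper bound on $\widehat\lambda/\lambda$.} The tiles strictly between $T_1$ and $\widehat T$ in the nonatree each have a single child, so every point of $S$ in $\widehat T$ lies in the chain tile containing $T_1$ at each intermediate scale. The same holds for $T_2$ and $\overline{p}(T_2)$, whose scale is $\lambda_2\ge\widehat\lambda$. Suppose $\widehat\lambda\ge 3^{c}\lambda$ for a large constant $c$ (say $c=4$ already gives $3^c>13$). Then the ancestor $A_1$ of $T_1$ at scale $3^{c}\lambda$ has side length far larger than $10\lambda$. It is therefore contained in, or adjacent to, the ancestor $A_2$ of $T_2$ at the same scale; in fact either $A_1=A_2$ or the two are neighbors.

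\textbf{Contradiction in each case.} If $A_1=A_2$, then $A_1$ contains points in two disjoint subtrees, namely $T_1$ and $T_2$. These descend through distinct children at some scale strictly between $\lambda$ and $3^c\lambda$, so some ancestor in that range has two children. That ancestor lies in $\mathcal{R}_\sigma$ and is smaller than $\widehat T$, contradicting $\widehat T=\overline{p}(T_1)$. If $A_1\ne A_2$, we instead use the bichromatic neighborhoods. The main obstacle is making precise that chains of single-child tiles above $T_1$ and above $T_2$ cannot both be long when the two tiles are only $10\lambda$ apart. I would argue as follows. Let $\widehat T_2=\overline{p}(T_2)$ have scale $\lambda_2\ge\widehat\lambda\ge 3^c\lambda$. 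Then $\widehat T_2$ is a tile of side at least $3^c\lambda$, and the whole ancestor chain of $T_2$ up to $\widehat T_2$ is single-child, so every point of $\widehat T_2$ lies in $T_2$. Comparing $\widehat T_2$ with the ancestor of $T_1$ at scale $\lambda_2$ (which contains $\widehat T$, or is $\widehat T$ when $\lambda_2=\widehat\lambda$), these two tiles of equal scale are either equal, which gives the same contradiction as before, or adjacent. To handle the adjacent case, which mirrors the standard proof of \cite[Theorem~4.1.4]{HarPeled11}, I would reverse the roles: charge to the lower of the two $\overline{p}$'s, and show that $\widehat T$ has $\Omega(\widehat\lambda)$-separated content from $T_1$ only through a bounded number of configurations. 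If needed I would accept an unbounded ratio and instead count directly. For each scale $\lambda<\widehat\lambda$ there is exactly one chain tile containing $T_1$'s points, and the pair is determined by the chain tile plus $j$-offset. Then I would show that only $O(1)$ scales can have a nonempty partner $T_2$ at distance $10\lambda$ whose own $\overline{p}$ is at least as large: at a scale $\lambda\ll\widehat\lambda$ the partner's points would all lie within distance about $13\lambda$ of $T_1$'s chain, hence inside the chain tile or its neighbors, forcing a branching below $\widehat T$.

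\textbf{Lower bound and conclusion.} Since $T_1\subseteq\widehat T$, we have $\widehat\lambda\ge\lambda$. At each of the $O(1)$ admissible scales, $T_1$ is one of at most $9^{c}$ descendants of $\widehat T$, and $T_2$ is then fixed. Summing over the $O(1)$ hierarchies gives $O(1)$ charges per tile.
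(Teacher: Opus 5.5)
\textbf{There is a genuine gap.} Your argument rests on confining $\widehat\lambda/\lambda$ for every charged pair, and that claim is false. The case $A_1\neq A_2$, which you flag as the main obstacle, cannot be closed.

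Here is a counterexample. Normalize $\theta=0$ and take $\widehat\lambda=3^K\lambda$ with $K$ large. Let $T_1=[-\lambda,0]\times[0,\delta\lambda]$ and $T_2=[9\lambda,10\lambda]\times[0,\delta\lambda]$, and put a tight red--blue pair near the center of each. Add one further point in each of $[-\widehat\lambda,-\frac23\widehat\lambda]\times[0,\delta\widehat\lambda]$ and $[\frac23\widehat\lambda,\widehat\lambda]\times[0,\delta\widehat\lambda]$. Since $x=0$ is a tile boundary at every scale, all ancestors of $T_1$ and $T_2$ strictly below scale $\widehat\lambda$ have a single child. From scale $27\lambda$ on they are $[-3^k\lambda,0]\times[0,3^k\delta\lambda]$ and $[0,3^k\lambda]\times[0,3^k\delta\lambda]$, respectively. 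Hence $\overline p(T_1)$ and $\overline p(T_2)$ are adjacent tiles of the same scale $\widehat\lambda$. Both neighborhoods are bichromatic, the pair is charged to $\widehat T=\overline p(T_1)$, and $\widehat\lambda/\lambda=3^K$ is unbounded.

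So the chains above $T_1$ and $T_2$ \emph{can} both be long, and the bichromatic neighborhoods do not help. Your fallback fails for the same reason: a single partner lying in a neighbor of the chain tile forces no branching below $\widehat\lambda$. No individual small scale can be excluded, so the final count over ``$9^c$ descendants'' has nothing to rest on. (Also, only the points of $S$ in the child of $\widehat T$ on the chain lie in $T_1$, not all of $S\cap\widehat T$.)

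\textbf{The missing idea} is to bound the number of pairs per chain, not their depth, by comparing \emph{two} pairs on the same chain. Fix a child $T_{\rm child}$ of $\widehat T$ and an orientation ($T_1$ left, say). The tiles $T_1$ of pairs charged to $\widehat T$ through $T_{\rm child}$ lie on one single-child chain, hence at distinct scales. Normalize $\widehat\lambda=1$ and suppose there are five such pairs. The two lowest, $(T_1,T_2)$ and $(T_1',T_2')$, satisfy $\width(T_1)\le 3^{-5}$ and $\width(T_1')\le 3^{-4}$. Three facts then follow:
\begin{itemize}
\item $T_2$ and $T_2'$ are disjoint. The right edge of $T_2$ lies at most $10\,\width(T_1)$ beyond the right edge of $T_1$, while the left edge of $T_2'$ lies at least $9\,\width(T_1')\ge 27\,\width(T_1)$ beyond it.
\item Neither lies in $T_{\rm child}$, since otherwise the chain would branch below $\widehat T$.
\item Both lie in the row of $T_{\rm child}$, to its right, within $\frac13$ of its right edge. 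Hence both lie in the same neighboring tile $T^*$ at the scale of $T_{\rm child}$.
\end{itemize}
Therefore $\mathrm{lca}(T_2,T_2')\subseteq T^*$ has two children, belongs to $\mathcal{R}_\sigma$, and lies below scale $\widehat\lambda$. So $\overline p(T_2)$ has scale smaller than $\widehat\lambda$, and the pair would have been charged to $\overline p(T_2)$, not to $\widehat T$. This gives at most four pairs per (child, orientation) class, plus the pairs with $T_1=\widehat T$. That is $O(1)$ charges however deep the pairs sit. One pair alone can never be ruled out, which is why your per-pair approach cannot succeed.
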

\begin{proof}
Let $\sigma\in \{0,\ldots ,10\}$ and $T\in \mathcal{R}_\sigma$. We claim that $T$ has received at most 148 units of charges.  Suppose, for contradiction, that $T$ receives 149 or more charges. Assume w.l.o.g.\ that $T=T_{\theta,\lambda}(i,j)$ with $\theta=\lambda=\sigma=0$. In particular, $T$ is an axis-aligned rectangle of $\width(T)=1$.  

Let $\mathcal{P}_1$ be the collection of pairs of tiles $(T_1,T_2)$ such that two bichromatic edges corresponding to $(T_1,T_2)$ are charged to $T$ and $\overline{p}(T_1)=T$. By assumption, we have $|\mathcal{P}_1|\geq \lceil 149/2\rceil =75$. 
In each pair $(T_1,T_2)\in \mathcal{P}_1$, the tile $T_1$ is either \emph{left} (i.e., $T_1=T(i,j)$ and $T_2=T(i+10,j)$) or \emph{right} (i.e., $T_1=T(i+10,j)$ and $T_2=T(i,j)$). We may assume w.l.o.g. that in at least half of these pairs, $T_1$ is left. Let $\mathcal{P}_2\subseteq \mathcal{P}_1$ be the set of these pairs; note that $|\mathcal{P}_2|\geq \left\lceil |\mathcal{P}_1|/2\right\rceil=38$. 

For each pair $(T_1,T_2)\in \mathcal{P}_2$, either $T_1=T$ or the nonatree $\mathcal{Q}_\sigma$ contains an descending chain $\mathrm{chain}(T_1)$ from $T$ to $T_1$ where all intermediate tiles have only one child (i.e., they are in $\mathcal{Q}_\sigma\setminus \mathcal{R}_\sigma$). There is at most one pair $(T_1,T_2)\in \mathcal{P}_2$ such that $T_1=T$; so $T_1$ is a strict descendant of $T$ in at least 37 pairs in $\mathcal{P}_2$. 
If $T_1\neq T$, then $\mathrm{chain}(T_1)$ includes exactly one child of $T$. Since $T$ has at most nine children in $\mathcal{Q}_\sigma$, there are at least $\lceil 37/9\rceil =5$ pairs in $\mathcal{P}_2$ where $\mathrm{chain}(T_1)$ includes the same child of $T$; let $T_{\rm child}$ denote this child of $T$ and let $\mathcal{P}_3$ be the set of these pairs; see \Cref{fig:quadtree}.

 \begin{figure}[tbh]
        \centering
        \includegraphics[width=.4\textwidth]{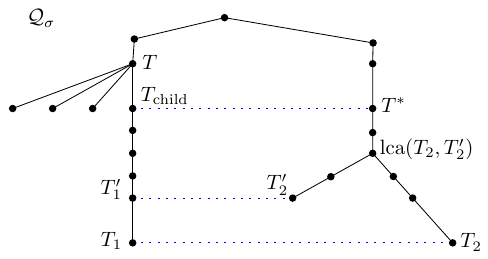}
        \caption{A schematic layout of the nodes in the nonaree $\mathcal{Q}_\sigma$}
        \label{fig:quadtree}
    \end{figure}

Note for all pairs $(T_1,T_2)\in \mathcal{P}_3$, the tiles $T_1$ are at different scales (since they are in the same single-child chain). Let $(T_1,T_2)$ and $(T_1',T_2')$ be the pairs in $\mathcal{P}_3$ at the lowest and the second lowest scales, respectively. Since $|\mathcal{P}_3|\geq 5$ and $\width(T)=1$, 
then $\width(T_1)\leq 3^{-5}$ and $\width(T_1')\leq 3^{-4}$. Since $T_1\subseteq T_1'$, then 
\begin{equation}\label{eq:type2}
\dist(T_1',T_2)\leq \dist(T_1,T_2)=9\,\width(T_1)\leq 3\, \width(T_1'). 
\end{equation}
It follows that 
\begin{align*}
\dist(T_2,T_2')
    &\geq \dist(T_1',T_2')-\dist(T_1',T_2) -\width(T_2)\\
    &\geq \dist(T_1',T_2')-\dist(T_1,T_2) -\width(T_2)\\
    &\geq \left(9-3-\frac13\right)\, \width(T_1') 
    >5\,\width(T_1')>0,
\end{align*} 
and, in particular, the tiles $T_2$ and $T_2'$ are disjoint. 
Furthermore, $T_1\subset T_1'\subseteq T_{\rm child}$ implies 
\begin{align*}
\dist(T_{\rm child},T_2)&\leq \dist(T_1,T_2) = 9\, \width(T_1) \leq \frac{9}{3^5} <\frac19,\\
\dist(T_{\rm child},T_2')&\leq \dist(T_1',T_2')=9\, \width(T_1') \leq \frac{9}{3^4} <\frac13.
\end{align*}
As $\width(T_{\rm child})=\frac13$, both $T_2$ and $T_2'$ are contained in some tile $T^*\in \mathcal{Q}_\sigma$ at the same scale as $T_{\rm child}$. But $\mathrm{chain}(T_1)$ is a single-child chain of tiles, so  $T_2,T_2'\not\subseteq T_{\rm child}$, and $T^*\neq T_{\rm child}$ (although, $T^*$ could be a sibling of $T_{\rm child}$). 
Note that the least common ancestor $\mathrm{lca}(T_2,T_2')$ of $T_2$ and $T_2'$ in the nonatree $\mathcal{Q}_\sigma$ satisfies $\mathrm{lca}(T_2,T_2')\subseteq T^*$. In particular, $\mathrm{lca}(T_2,T_2')$ is at a scale below $T$; and $\mathrm{lca}(T_2,T_2')$ has at least two children, which implies $\mathrm{lca}(T_2,T_2')\in \mathcal{R}_\sigma$. Consequently, both $\overline{p}(T_2)$ and $\overline{p}(T_2')$ are at a scale below $T=\overline{p}(T_1)=\overline{p}(T_1')$; and so our charging scheme could not have charged $(T_1,T_2)$ and $(T_1',T_2')$ to $T$: A contradiction. We conclude that at most 148 edges were charged to the tile  $T$.
\end{proof}

\begin{lemma}\label{lem:type2}
For every bichromatic set $S=R\cup B$ of $n$ points in the plane, and for every $\theta\in \Phi$, the graph $G$ has $O(n)$ edges of Type~2.
\end{lemma}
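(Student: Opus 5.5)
The plan is to combine the charging scheme defined just before \Cref{lem:type2-pre} with the bound \eqref{eq:partition} on the number of branching tiles. Fix $\theta\in\Phi$. Every Type~2 edge is added while processing a pair $\{T_1,T_2\}=\{T_{\theta,\lambda}(i,j),T_{\theta,\lambda}(i+10,j)\}$ of nonempty tiles with bichromatic neighborhoods. Each such pair contributes at most two edges, namely $r^*b^*$ and $r^{**}b^{**}$. So it suffices to bound the number of processed pairs, counted separately for the tilings $T_{\theta,\lambda}$ and $T'_{\theta,\lambda}$.

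First, consider the unshifted tilings. Since $\gamma^{11}=3$, every $\lambda\in\Lambda$ lies in exactly one class $\Lambda_\sigma$, $\sigma\in\{0,\ldots,10\}$. Both tiles $T_1,T_2$ of a processed pair have the same scale $\lambda$, and both are nonempty. Hence both lie in the same nonatree $\mathcal{Q}_\sigma$, and the closest ancestors $\overline{p}(T_1),\overline{p}(T_2)$ in $\mathcal{R}_\sigma$ are well defined. The charging scheme therefore assigns every Type~2 edge of this pair to a tile of $\mathcal{R}_\sigma$.

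By \Cref{lem:type2-pre}, each tile in $\bigcup_\sigma\mathcal{R}_\sigma$ receives at most $148$ charges. By \eqref{eq:partition}, $|\mathcal{R}_\sigma|\le n$. The number of Type~2 edges coming from the tilings $T_{\theta,\lambda}$ is thus at most $148\sum_{\sigma=0}^{10}|\mathcal{R}_\sigma|\le 148\cdot 11\, n=O(n)$.

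Next, I repeat the same argument verbatim for the translated tilings $T'_{\theta,\lambda}$. Their tiles over $\lambda\in\Lambda_\sigma$ also form laminar nonatrees~\cite{BhoreNTW26}, so \eqref{eq:partition} and \Cref{lem:type2-pre} apply to them as the paper notes. This gives another $O(n)$ bound, and summing the two contributions proves the lemma.

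The one point that needs care is justifying \eqref{eq:partition}, which is taken as given. The leaves of the partition tree formed by $\mathcal{R}_\sigma$ partition $S$, and every internal node has at least two children. Hence there are fewer than $n$ internal nodes. One should also double-check that every nonempty tile has an ancestor in $\mathcal{R}_\sigma$; for $n\ge2$ this holds, since a sufficiently large tile contains all of $S$ and the tree must branch at some tile. I expect no real obstacle here: the substantive work is already in \Cref{lem:type2-pre}, and the lemma follows by this bookkeeping.
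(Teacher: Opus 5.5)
Your proposal is correct and is the paper's own argument: charge the at most two Type~2 edges of each Case~3 pair to $\overline{p}(T_1)$ or $\overline{p}(T_2)$, bound the charges per tile by Lemma~\ref{lem:type2-pre} and the number of tiles by \eqref{eq:partition}, and sum over the $11$ classes $\Lambda_\sigma$ and the two families $T_{\theta,\lambda}$, $T'_{\theta,\lambda}$. One caveat concerns your side remark, which the paper glosses over in the same way: because the tilings are anchored at the origin, the rotated $y$-axis is a tile boundary at every scale, so in general no tile contains all of $S$ and $\overline{p}$ can be undefined; this is fixed by translating $S$ so that it lies in one open quadrant of each rotated frame (as the bounding-box roots in \Cref{ssec:runtime} effectively do), after which every tile occurring in a pair has a branching ancestor, since its ancestors eventually pick up the points of its partner tile.
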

\begin{proof}
This readily follows from the combination of \Cref{eq:partition} and \Cref{lem:type2-pre}.
\end{proof}

\subsection{Running Time Analysis}
\label{ssec:runtime}

For every set $S$ of $n$ bichromatic points in the plane and every $\eps>0$, our spanner construction (\Cref{subsec:construct}) can be implemented in $O(\sqrt{1/\eps}\cdot n\log n)$ time. The construction is done independently for each $\theta\in \Phi$, where $|\Phi|=O(\sqrt{1/\eps})$, so it is enough to show that the construction takes $O(n\log n)$ time for each $\theta\in \Phi$. Assume w.l.o.g.\ that $\theta=0$. Instead of infinite tilings, we can use nonatrees (the generalization of quadtrees, where each cell is split into 9 congruent cells) after an affine transformation $f(x,y)=(x,y/\delta)$, which  maps an $1\times \delta$ rectangle into a unit square. Let $S_0:=f(S)$. Let $B_0$ be an axis-aligned bounding box of $S_0$ where $\width(B_0)=2\, \mathrm{height}(B_0)$, and scale $B_0$ centrally up by factors of $3^{\sigma/11}$ for $\sigma=0,\ldots ,10$. Each scaled copy of $B_0$ can be expressed as an intersection of two axis-aligned squares $B_0=B\cap B'$, where $\width(B)=\width(B')=\width(B_0)$; note that $B'=B+(0,\frac12\, \mathrm{height}(B))$.

Starting from each of these 22 squares ($B$ and $B'$ with $\sigma=0,\ldots  ,10$), as root cells, we can build compressed nonatrees $\mathcal{R}_\sigma$ and $\mathcal{R}'_\sigma$ for $S_0$ in $O(n\log n)$ time; analogously to the construction of a quadtrees~\cite{BhoreNTW26,BergCKO08,HarPeled11}. The compressed nonatree for $n$ points has $O(n)$ cells, but the (uncompressed) nonatree can have arbitrarily many cells (as it may have descending chains of arbitrarily length). By \Cref{lem:type1-pre,lem:type2-pre}, our construction only uses (uncompressed) nonatree nodes that have an ascending chain of length at most 5 to a nodes of the compressed nonatree $\mathcal{R}_\sigma$: We augment $\mathcal{R}_\sigma$ with up to 5 single-child descendants per node to obtain a tree $\mathcal{R}^*_\sigma$ with $O(n)$ nodes. In a bottom-up traversal of $\mathcal{R}^*_\sigma$, we can find the leftmost and rightmost red (blue) points in each cell in $O(n)$ time. For each cell $T(i,j)$, we can find their eight adjacent cells in $N(i,j)$ in $O(\log n)$ time, and classify $N(i,j)$ as either monochromatic or bichromatic. In another bottom-up traversal, we find all pairs of tiles $T(i,j)$ and $T(i+10,j)$ and find the edges of $G$, according to Cases~1--3 in \Cref{subsec:construct}, in $O(n)$ time. 

\subsection{Generalization to Multichromatic Point Sets}
\label{ssec:multichromatic}

Our construction for the $(3+\eps)$-spanner can be generalized for multichromatic point sets: Let $S$ be a set of $n$ multichromatic points in the plane, which is partitioned into $k\geq 2$ color classes $S=V_1\cup \ldots\cup V_k$. For a given parameter $\eps > 0$, we use the family of tilings $\mathcal{T}$ defined in \Cref{subsec:tiling} with the separation constant $\mu = 11$. 

Recall that for bichromatic point sets, the construction in \Cref{subsec:construct} distinguishes between four cases based on whether the neighborhoods $N_{\theta, \lambda}(i,j)$ and $N_{\theta, \lambda}(i+10,j)$ are monochromatic or bichromatic. For multichromatic point sets, we follow a modified version of the construction in \Cref{subsec:construct} and distinguish between four cases based on whether the neighborhoods $N_{\theta, \lambda}(i,j)$ and $N_{\theta, \lambda}(i+12,j)$ are monochromatic or multichromatic.

We begin by initializing $G =(S,E)$ to be the empty graph and consider four cases. In Cases~0 and 1, both $N_{\theta, \lambda}(i,j)$ and $N_{\theta, \lambda}(i+12,j)$ are monochromatic, and handle them exactly as in \Cref{subsec:construct} (we add a bistar between them if they have different colors). Cases~2 and 3 are replaced by the following modified versions:

\begin{enumerate}
    \item[Case~2:] Exactly one of $N_{\theta, \lambda}(i,j)$ and $N_{\theta, \lambda}(i+12,j)$ is monochromatic. Assume w.l.o.g. that $N_{\theta, \lambda}(i,j)$ is monchromatic and $N_{\theta, \lambda}(i+12,j)$ is multichromatic (the other case is symmetric). Let $v^*$ be a leftmost point in $S\cap N_{\theta, \lambda}(i+12,j)$ that is of different color than all points in $N_{\theta, \lambda}(i,j)$ and for every point in $T_{\theta, \lambda}(i,j)$ add an edge to $v^*$. 

    \item[Case~3:] Both $N_{\theta, \lambda}(i,j)$ and $N_{\theta, \lambda}(i+12,j)$ is multichromatic. Let $r^*$ be a rightmost point in $S\cap N_{\theta, \lambda}(i,j)$, and let $b^{**}$ be a rightmost point in $S\cap N_{\theta, \lambda}(i,j)$ of a color  other than the color of $r^*$. We may assume w.l.o.g.\ that $r^*$ is red and $b^{**}$ is blue. Let $u^*$ and $v^{**}$ be a leftmost point in $S\cap N_{\theta, \lambda}(i+12,j)$ that is non-red and non-blue, respectively. In this case, we add two edges to $G$: the edges $r^*u^*$ and $b^{**}v^{**}$.
\end{enumerate}

This completes the construction of the $k$-partite graph $G = (S,E)$. the following lemma shows that $G$ is a $(3+\eps)$-spanner of the complete multipartite graph $K(V_1,\ldots,V_k)$. Unlike the bichromatic argument, the stretch analysis for Cases~2--3 will be more complicated, and will each require distinguishing between two subcases. 

\begin{lemma} \label{lem:genstretch}
    For every finite multichromatic set $S$ in the plane, the graph $G \subseteq K(V_1,\ldots,V_k)$ constructed above is a $(3+\eps)$-spanner for $K(V_1,\ldots,V_k)$.
\end{lemma}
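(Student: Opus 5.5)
The plan is to copy the induction from the proof of \Cref{lem:stretch}, now on the lengths of all edges of $K(V_1,\ldots,V_k)$, that is, on all pairs of points with distinct colors. Fix such a pair $(u,v)$ and assume the claim for all shorter such pairs. By \Cref{lemma:spacing} with $\mu=11$, after normalizing we may take $\theta=0$, $\lambda=1$, $u\in T(i,j)$ and $v\in T(i+12,j)$, with $11\le x(v)-x(u)\le 13$ and $|uv|\ge 11$. So the target bound is at least $11(3+\eps)=33+11\eps$. The tools are the ones already used:
\begin{itemize}
\item $\diam(T(i,j))\le 1+\eps/14$, obtained from \Cref{eq:P};
\item any point of $N(i',j')$ is within $3\diam(T)$ of any other point of $N(i',j')$;
\item a point of $T(i',j')$ is within $2\diam(T)$ of any point of $N(i',j')$;
\item all such intra-neighborhood pairs are shorter than $11\le|uv|$, so the induction hypothesis applies to them whenever their colors differ.
\end{itemize}
Case~1 (both neighborhoods monochromatic with different colors) is verbatim the bichromatic argument. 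It uses the bistar path $(u,b^*,r^*,v)$ and the monotonicity of $x$-coordinates, giving three edges each of length at most $(1+\eps/3)|uv|$. Case~0 cannot occur, since $u$ and $v$ have different colors.

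For Case~2, say $N(i,j)$ is monochromatic of color $c=\mathrm{col}(u)$ and $N(i+12,j)$ is multichromatic. The construction gives the edge $uv^*$ with $\mathrm{col}(v^*)\ne c$, and $|uv^*|\le\sqrt{13^2+(2\delta)^2}\le 13+O(\eps)$. I split into two subcases.
\begin{itemize}
\item If $\mathrm{col}(v^*)\ne\mathrm{col}(v)$, apply the induction hypothesis to $v^*v$, which has length at most $3\diam(T)$. The total is about $13+9+O(\eps)=22+O(\eps)<33$.
\item If $\mathrm{col}(v^*)=\mathrm{col}(v)$, multichromaticity gives a point $w\in N(i+12,j)$ of a different color. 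Use two inductive paths, $v^*\leadsto w$ and $w\leadsto v$. The total is about $13+(3+\eps)(3+2)(1+\eps/14)=28+O(\eps)$, still below $33+11\eps$ after the same routine $\eps^2<\eps$ bookkeeping as in \Cref{eq:case2}.
\end{itemize}

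Case~3 (both multichromatic) is where the two edges $r^*u^*$ and $b^{**}v^{**}$ must be chosen cleverly, and it needs the same kind of subcase split. The aim is a path of the form $u\leadsto x$, then the edge $xy$, then $y\leadsto v$, where $xy$ is one of the two added edges. I pick $xy$ so that as many of the pairs $(u,x)$ and $(y,v)$ as possible have distinct colors.
\begin{itemize}
\item If $u$ is red and $v$ is blue, the edge $b^{**}v^{**}$ works with no detour. The cost is at most $(3+\eps)\cdot 2+\sqrt{15^2+9\delta^2}+(3+\eps)\cdot 2\approx 27$.
\item Otherwise I try $r^*u^*$ when $\mathrm{col}(u)\ne$ red, and $b^{**}v^{**}$ when $\mathrm{col}(u)=$ red. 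If the far endpoint happens to share $v$'s color, I detour through a differently colored point $w$ of $N(i+12,j)$, as in Case~2.
\end{itemize}

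The main obstacle is the arithmetic in this detour subcase of Case~3. The crude estimate is about $6+15+15=36$, which exceeds $33$. So I expect to need one of two fixes. The first is to exploit the extra freedom: if one edge forces a detour, check whether the other added edge avoids it, using that $u^*$ is non-red and $v^{**}$ is non-blue. The second is to bound the middle edge more sharply via the leftmost/rightmost choices. Since $x$ is rightmost and $y$ is leftmost in their neighborhoods, $x(y)-x(x)$ can be compared to $x(v)-x(u)$ when the detour point lies on the favorable side, as in \Cref{eq:case1}, replacing the $15$ by roughly $|uv|$. If neither fix suffices, I would enlarge the gap between the paired tiles, which only affects constants. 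The separation $\mu=11$ with index gap $12$ was presumably chosen to make exactly this worst subcase close.
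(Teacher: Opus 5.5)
Your skeleton matches the paper's: induction over all multichromatic distances, the bistar path in Case~1, the split on whether $v^*$ shares $v$'s color in Case~2, and in Case~3 a path through one added edge with a detour through a point $w\in N(i+12,j)$ of another color. Cases~1 and~2 are fine.

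The gap is that you never close the detour subcase of Case~3. Your estimate $6+15+15=36$ exceeds $33$, and none of your remedies is carried out.
\begin{itemize}
\item Enlarging the gap is not admissible: the lemma is about the graph built with $\mu=11$ and index gap $12$.
\item Your first fix cannot work alone. $u^*$ and $v^{**}$ may both have $v$'s color (e.g.\ $u^*=v^{**}$ when the leftmost point of $N(i+12,j)$ is neither red nor blue and has $v$'s color). If also $u$ is red, the edge $r^*u^*$ needs a detour at the near end too.
\item Your second fix is the right idea, but the relevant fact is not the side of the detour point. In this subcase the far endpoint $y$ has $v$'s color, so $v$ was eligible for $y$ and $x(y)\le x(v)$.
\end{itemize}
Comparing the middle edge with $x(v)-x(u)$ also needs the near endpoint to lie right of $u$. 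This holds in your branch $\mathrm{col}(u)\neq$ red, since $r^*$ is rightmost among \emph{all} points of $N(i,j)$. Then $|r^*u^*|\le x(v)-x(u)+\eps/14$, and the path has length at most $21+|uv|+O(\eps)<(3+\eps)|uv|$ because $|uv|\ge 11$. It fails when $u$ is red. There $b^{**}$ is rightmost only among non-red points and may lie two units left of $u$. Term-by-term bounds, even with the middle edge capped at $x$-extent $14$, give $6+14+15=35>33$.

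What is missing is a joint estimate in the red branch: path $u\to b^{**}$, edge $b^{**}v^{**}$, then paths $v^{**}\to w\to v$. Normalize $i=0$ and let $a,b,c,d,e$ be the $x$-coordinates of $u,b^{**},v^{**},v,w$. Then $a\in[-1,0]$, $b\in[-2,1]$, $10\le c\le d$, $d\in[11,12]$, $e\in[10,13]$. The $x$-part of the length bound is $3|a-b|+(c-b)+3(|c-e|+|e-d|)$, and a short case check gives
\[
3|a-b|-b+3a\le 8 \qquad\text{and}\qquad c+3|c-e|+3|e-d|-3d\le -8 .
\]
So this part is at most $3(d-a)\le 3|uv|$, with equality only at $(a,b,c,d,e)=(0,-2,10,11,13)$. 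At that point the $y$-corrections and the $\eps$-multiples total about $8.6\eps+\eps^2/2<11\eps\le\eps|uv|$. Away from it, the linear slack absorbs them.

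Be aware that the paper's Subcase~3b reaches $32+145\eps/14$ only by asserting $|v^{**}\tilde u|\le\sqrt{2^2+(3\delta)^2}$ for its detour point $\tilde u$ (your $w$). This does not follow from how $\tilde u$ is chosen: place $v^{**}$ in the left column and the only point of another color in the right column of $N(i+12,j)$. So copying the paper's numbers would not close the gap; the coupled computation above, whose $x$-part is tight, is what actually settles this case.
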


\begin{proof}
    We proceed by induction on the length of multichromatic edges in $K(V_1,\ldots,V_k)$. In particular, let $
    \alpha,\beta \in \{1,2,\ldots,k\}$ and $D = \bigcup_{\alpha\ne \beta}\{|uv| : (u,v) \in V_\alpha \times V_\beta$ be the set of distances of multichromatic pairs in $S$; and let $0 = d_0 < d_1 < d_2 < \cdots < d_m$ be elements of $\{0\} \cup D$ in increasing order. 

    We prove, by induction on $\ell = 0,1,\ldots, m$, for all $\alpha \ne \beta$, $(u,v) \in V_\alpha \times V_\beta$ with $|uv| \le d_\ell$, the graph $G$ contains an $uv$-path of length at most $(3+\eps)|uv|$. In the base case, where $\ell = 0$ and $d_\ell = 0$, the claim vacuously holds, as no multichromatic pair $(u,v)$ with $|uv| = 0$.

    In the induction step, let $d_\ell \in D$ and assume that for all $(u',v') \in V_\alpha \times V_\beta$, if $|u'v'| \le d_\ell$, then $G$ contains an $u'v'$-path of length at most $(3+\eps)|u'v'|$. Let $(u',v') \in V_\alpha \times V_\beta$ with $|uv| = d_\ell$. By \Cref{lemma:spacing}, there exists $\theta \in \Phi$ and $\lambda \in \Lambda$ such that $u \in T_{\theta,\lambda}(i,j)$ and $v \in T_{\theta, \lambda}(i+12,j)$. To simplify our argument, we may assume w.l.o.g. that $\theta = 0$ and $\lambda = 1$. Under these assumptions, $N_{\theta,\lambda} = N(i,j)$, which is an axis-aligned rectangle of width 3 and height $3\delta$. We distinguish between three cases:

    \begin{enumerate}
        \item[Case~1:] The upper bound on the stretch factor follows similarly from \Cref{eq:case1} in \Cref{lem:stretch} with the new separation constant $\mu = 11$. We have that the length of the $uv$-path is at most $(3+\eps)|uv|$.
        \item[Case~2:] One of $N(i,j)$ and $N(i+12,j)$ is monochromatic and the other is multichromatic. Assume w.l.o.g. that $N(i,j)$ is monochromatic and $N(i+12,j)$ multichromatic (the other case is symmetric). We distinguish between two subcases:
        
        Subcase~2a: $v^*$ and $v$ are of different colors. Since $v^*,v\in N(i+12,j)$, the edge $v^*v$ lies in an axis-aligned rectangle spanned by $T(i+12,j)$ and an adjacent tile, which implies $|v^*v| \le 2\diam(T(i,j))$ . Recall that $\diam(T(i,j)) = \sqrt{1+\delta^2} \le 1 + \eps/14$ by the Pythagorean Theorem and \Cref{eq:P}. We have that $|v^*v| \le 2\diam(T(i,j)) \le 2(1 + \eps / 14) < 11 \le |uv|$. By induction hypothesis, $G$ contains some path $P = (v^*,\ldots,v)$ of length at most $(3+\eps)|v^*v|$. Then, the concatenation of $uv^* \oplus P$ is a $uv$-path in $G$ and its length can be bounded above as follows
        \begin{align*}
            |uv^*| + |P| &\le \sqrt{13^2 + (2\delta)^2} + (3+\eps)|v^*v| \\
            &\le 13 + 2\delta^2 + 2(3+\eps)\diam(T(i,j)) \\
            &\le 13 + 2\eps / 7 + 2(3+\eps)(1+\eps / 14) \\
            &\le 19 + 40\eps / 7 \\
            &< 11(3+\eps) \\
            &\le (3+\eps)|uv|.
        \end{align*}
        So the concatenation $uv^* \oplus P$ is a $uv$-path of length at most $(3+\eps)|uv|$.
        
        Subcase~2b: $v^*$ and $v$ are of the same color. This case is identical to Case~2 in \Cref{lem:stretch} and the upper bound on the stretch factor follows similarly from \Cref{eq:case2} with the new separation constant $\mu = 11$. Thus, $G$ contains a $uv$-path of length at most $(3+\eps)|uv|$.

        \item[Case~3:]  Both $N(i,j)$ and $N(i+12,j)$ are multichromatic. Assume w.l.o.g. $u$ is non-blue (the other case is symmetric). We distinguish between two subcases: 
        
        Subcase~3a: $v^{**}$ and $v$ are of different colors. This case is identical to Case~3 in \Cref{lem:stretch} and the upper bound on the stretch factor follows similarly from \Cref{eq:case3} with the new separation constant $\mu = 11$. Thus, $G$ contains a $uv$-path of length at most $(3+\eps)|uv|$.

        Subcase~3b: $v^{**}$ and $v$ are of the same color, and let $\tilde{u} \in N(i+12,j)$ such that $\tilde{u}$ and $v^{**}$ are of different colors. Similar to Case~2, we have that $|ub^{**}| \le 2\diam(T(i,j)) < |uv|$ and $|\tilde{u}v| \le 2\diam(T(i+12,j)) < |uv|$. Furthermore, since $v^{**},v$ are of the same color, the edge $v^{**}\tilde{u}$ must have length at most $\sqrt{2^2 + (3\delta)^2} < 11 \le |uv|$. By the induction hypothesis, $G$ contains some paths 
        \[P_1 = (u,\ldots,b^{**}) \quad\text{and}\quad P_2 = (v^{**},\ldots,\tilde{u}) \quad\text{and}\quad P_3 = (\tilde{u},\ldots,v),\]
        resp., of length at most $(3+\eps)|ub^{**}|$, $(3+\eps)|v^{**}\tilde{u}|$, and $(3+\eps)|\tilde{u}v|$. Then, the concatenation $P_1 \oplus b^{**}v^{**} \oplus P_2 \oplus P_3$ is a $uv$-path of $G$ and its length can be bounded above as follows
        \begin{align*}
            |P_1| + |b^{**}v^{**}| + |P_2| + |P_3| &\le (3+\eps)|ub^{**}| + \sqrt{14^2 + (3\delta)^2} + (3+\eps)|v^{**}\tilde{u}| + (3+\eps)|\tilde{u}v| \\ 
            &\le 4(3+\eps)\diam(T(i,j)) + 14 + 9\eps/14 + (3+\eps)(2 + 9\eps / 14)\\
            &\le 4(3+\eps)(1+\eps / 14) + 14 + 9\eps/14 + (3+\eps)(2 + 9\eps / 14)\\
            &\le 32 + 145\eps/14 \\
            &< 11(3+\eps) \\
            &\le (3+\eps)|uv|.
        \end{align*}
        So the concatenation $P_1 \oplus b^{**}v^{**} \oplus P_2 \oplus P_3$ is a $uv$-path of length at most $(3+\eps)|uv|$. This completes the proof in both subcases. 
    \end{enumerate}

    Overall, the graph $G$ is a $(3+\eps)$-spanner of the complete $k$-partite graph $K(V_1,\ldots,V_k)$.
\end{proof}

\paragraph{Sparsity analysis.} 
It is easy to adapt the sparsity analysis from \Cref{subsec:analysis2} to the multichromatic case, and show that $G$ has $O(n)$ edges. We briefly sketch the key differences. Most importantly, we use a tiling with separation constant $\mu=11$ instead of $\mu=9$, which requires larger constant coefficients in each step of the analysis. Similarly to \Cref{subsec:analysis2}, we distinguish between two types of edges in $G$: \emph{Type~1} if it was added when the construction processed  a pair of tiles $T_{\theta,\lambda}(i,j), T_{\theta,\lambda}(i+12,j)$ where one of the tiles had monochromatic neighborhood; otherwise it is of \emph{Type~2}. 

For edges of Type~1, we can adjust \Cref{lem:type1-pre} to yield $p'<p+33$: If our construction adds an edge $st$ where $s$ is in a monochromatic neighborhood $N(i,j)$, and $t\in N(i+12,j)$, then both $s$ and $t$ are in the same neighborhood at 33 scales higher. 
It follows that $G$ has $O(n)$ edges of Type~1. 

For every pair of tiles $(T_1,T_2)=(T_{\theta,\lambda}(i,j), T_{\theta,\lambda}(i+12,j))$ with multichromatic neighborhoods, we add exactly two edges of Type~2, so it is enough to analyze the number of such pairs. We can arrange the tilings into $\mu+2=13$ nonatrees (where the sizes of the tiles differ by a power of 3).
In each nonatree, we can use the same charging scheme as in \Cref{subsec:analysis2}, and adjust the constants in the proof of \Cref{lem:type2-pre}. Specifically, for every edge of Type~2 between
a pair of tiles $(T_1,T_2)$ where $T_1\subsetneq T_1'$,  \Cref{eq:type2} is replaced by 
\begin{equation}\label{eq:type2+}
\dist(T_1',T_2)\leq \dist(T_1,T_2)=11\,\width(T_1)< 9\, \width(T_1'). 
\end{equation}
Consequently, we need $|\mathcal{P}_3|\geq 6$ to ensure that
$\width(T_1)\leq 3^{-6}$ and $\width(T_1')\leq 3^{-5}$.
\Cref{lem:type2-pre} combined with \Cref{eq:partition}, for $\sigma\in \{0,\ldots, 12\}$, implies that $G$ has $O(n)$ edges of Type~2.

\subsection{Generalization to Higher Dimensions}
\label{ssec:dspace}

It is straightforward to generalize our construction to Euclidean $d$-space: 
In $\mathbb{R}^2$, we used basic tiles of size $1\times \sqrt{\eps/7}$, and rotated the tilings with a set $\Phi\subset [0,\pi]$ of $\Theta(\sqrt{1/\eps})$ angles at distance $\arctan(\sqrt{\eps/7}/22)=O(\sqrt{\eps})$ apart. In $d$-space, for $d\geq 3$, we use basic tiles of size $1\times \sqrt{\eps/(d+5)} \times \ldots \times \sqrt{\eps/(d+5)}$, and we use a set $\Phi$ of rotations constructed as follows: We choose a set $U_d\subset \mathbb{S}^{d-1}$ of unit vectors such that the $O(\sqrt{\eps})$-neighborhoods of the vectors in $\mathbb{S}^{d-1}$ (where each neighborhood is a spherical cap) cover $U_d$. A standard packing argument shows that $O(1/\eps^{(d-1)/2})$ directions suffice. Then for each $\vec{u}\in U_d$, we rotate the basic tiling such such that the positive $x_1$-axis is parallel to $\vec{u}$. With this setup, \Cref{lemma:spacing}, generalizes. The construction, as well as the stretch and sparsity analyses readily extend to $\mathbb{R}^d$. 
The number of edges is $O(|U_d|\cdot  n)=O(n/\eps^{(d-1)/2})$.

\section{Bichromatic Spanners on the Line}
\label{sec:1D}

In this section, we study spanners for $n$ bichromatic points on the real line $\mathbb{R}$. The lower bound construction in \cite{BoseCCMMS09} shows, that for every $\eps>0$, there is a bichromatic point set such that a $(3-\eps)$-spanner must contain all edges of the complete bipartite graph $K(A,B)$; this construction consists of two monochromatic clusters of diameter $\frac{\eps}{3}$ at unit distance apart, and can be realized already in 1-dimension. We show that there is 3-spanner with at most $2n-3$ edges (\Cref{prop:3}; furthermore, the minimum spanning tree of $K(R,B)$ (denoted $\mathrm{minBST}(S)$) is a 7-spanner, and stretch bound 7 is the best possible (\Cref{prop:1D}). 

We start by introducing some notation. 
Let $S=R\cup B$ be a set of $n$ bichromatic points on a line. Assume that $S=\{s_1,\ldots , s_n\}$, where the points are labeled in linear order along the line. Partition $S$ into maximal subsets of consecutive points of the same color $S=\bigcup_{i=1}^k S_i$; we call these subsets \emph{blocks}. Clearly, the blocks span disjoint intervals along the line, which are also ordered linearly. For any nonempty subset $S'\subseteq S$, denote by $\ell(S')$ and $r(S')$ the leftmost and rightmost point in $S'$, respectively. 

\begin{proposition}\label{prop:3}
    For a set $S=B\cup R$ of $n$ bichromatic points on a line, there is a 3-spanner with at most $2n-3$ edges that can be constructed in $O(n\log n)$ time.
\end{proposition}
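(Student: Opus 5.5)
We use a direct construction based on the blocks $S_1,\ldots,S_k$ (with $k\geq 2$ since both colors occur), exploiting that consecutive blocks have opposite colors. For every $i\in\{1,\ldots,k-1\}$ we add a \emph{bistar} between $S_i$ and $S_{i+1}$. It joins every point of $S_i$ to $\ell(S_{i+1})$ and every point of $S_{i+1}$ to $r(S_i)$. Every edge is bichromatic, and the bistar has exactly $|S_i|+|S_{i+1}|-1$ distinct edges, since $r(S_i)\ell(S_{i+1})$ is counted twice. Summing over $i$ gives
\[
\sum_{i=1}^{k-1}\bigl(|S_i|+|S_{i+1}|-1\bigr)=2n-|S_1|-|S_k|-(k-1)\leq 2n-2-(k-1)\leq 2n-3 .
\]
For the running time, we sort the points in $O(n\log n)$ time, compute the blocks and their endpoints in one scan, and output the bistars in $O(n)$ time.

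For the stretch, take $r\in R$ and $b\in B$. By symmetry (reflecting the line) assume $r<b$, with $r\in S_a$ and $b\in S_c$, so $a<c$. The key observation is that the leftmost points form a monotone chain. Since $r\in S_a$, the bistar of $(S_a,S_{a+1})$ contains the edge $r\,\ell(S_{a+1})$. For each $i$, the bistar of $(S_i,S_{i+1})$ contains the edge $\ell(S_i)\,\ell(S_{i+1})$. This yields the path
\[
r,\ \ell(S_{a+1}),\ \ell(S_{a+2}),\ \ldots,\ \ell(S_c),
\]
whose points move monotonically to the right. It therefore has length $\ell(S_c)-r$; when $c=a+1$ it is the single edge $r\,\ell(S_c)$. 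Next, the bistar of $(S_{c-1},S_c)$ contains the edge $\ell(S_c)\,r(S_{c-1})$, since $\ell(S_c)\in S_c$. It also contains the edge $r(S_{c-1})\,b$, since $b\in S_c$. Appending these two edges gives an $rb$-path.

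We bound its length using $r\leq r(S_{c-1})<\ell(S_c)\leq b$. The path has length
\[
(\ell(S_c)-r)+(\ell(S_c)-r(S_{c-1}))+(b-r(S_{c-1}))=(b-r)+2\bigl(\ell(S_c)-r(S_{c-1})\bigr)\leq 3|rb|,
\]
because $[r(S_{c-1}),\ell(S_c)]\subseteq[r,b]$. Degenerate cases, such as $r=r(S_{c-1})$ or $b=\ell(S_c)$, only shorten the path (repeated vertices can be dropped), so the bound still holds.

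I expect the main obstacle to be choosing the construction so that both bounds hold at once. Connecting every point to the representatives of \emph{both} neighbouring blocks would exceed $2n-3$ edges. The bistar choice works because one side of each bistar always routes through the \emph{leftmost} points, which keeps the long monotone part of the path free of backtracking. The only backtracking is the gap between the two blocks adjacent to the far endpoint, and it is traversed twice, which gives exactly the factor $3$. The remaining items to check are routine: the edge count when some blocks are singletons, and that reflection symmetry covers the case $b<r$.
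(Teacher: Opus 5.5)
Your proof is correct and is essentially the paper's. The paper joins each $s\in S_i$ to $r(S_{i-1})$ and $\ell(S_{i+1})$, which is exactly your union of bistars with the same count $2n-(|S_1|+|S_k|)-(k-1)$; and the paper's path $(a,\ell(S_{i+1}),r(S_i),r(S_{i+1}),\ldots,r(S_{j-1}),b)$ is the mirror image of yours, running along rightmost points and covering the gap next to the starting point $a$ three times rather than the gap next to $b$.

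The only slip is in your closing commentary: joining every point to a representative in each neighbouring block does stay within $2n-3$ edges. Your bistars are precisely such a construction, with representatives $r(S_{i-1})$ and $\ell(S_{i+1})$.
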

\begin{proof}
We construct a subgraph $H$ of $K(A,B)$ as follows. 
Consider the partition of $S$ into blocks, $S=\bigcup_{i=1}^k S_i$. 
For every $i=1,\ldots , k$, and for every point $s\in S_i$, add edges $s r(S_{i-1})$ and $s \ell(S_{i+1})$ to $H$ (if these edges exist). In particular, 
we add only one edge for every $s\in S_1\cup S_k$, and two edges for all other vertices.  Note also that the $k-1$ edges of the form $r(S_i)\ell(S_{i+1})$ are counted twice. Since $|S_1|\geq 1$, $|S_k|\geq 1$, and $k\geq 2$, then the total number of edges in $H$ is $2n-(|S_1|+|S_k|)-(k-1)\leq 2n-2-1=2n-3$.

Next we show that $H$ is a 3-spanner. Let $a,b\in S$ be two points of opposite colors, and assume w.l.o.g. that $a\in S_i$, $b\in S_j$ for some $1\leq i<j\leq k$. Then $H$ contains the path 
\[
    P=(a,\ell(S_{i+1}),r(S_i), r(S_{i+1}), \ldots , 
    r(S_{j-1}), b).
\]
The first three edges of the path $P$ intersect in the interval, $r(S_i), \ell(S_{i+1})$. All subsequent edges are pairwise interior disjoint (from each other and from the first three edges). This implies that every point on the line is contained in at most 3 edges of $P$, which immediately implies that $|P|\leq 3\, |ab|$, as required. 
\end{proof}

\paragraph{Bichromatic Minimum Spanning Tree.}
Bandyapadhyay et al.~\cite{BandyapadhyayBB21} showed that if $n$ bichromatic points are already sorted along the line, then $\mathrm{minBST}(S)$ can be computed in $O(n)$ time. 
(Biniaz et al.~\cite{BiniazBEMMS18} showed how to compute $\mathrm{minBST}(S)$ in $O(n\log n)$ in the plane.) 

We start with a characterization of $\mathrm{minBST}(S)$, based on the observations in~\cite{BandyapadhyayBB21}. Consider a block $S_i$, $1\leq i\leq k$, defined above. For every point $s\in S_i$, the closest point of the opposite color is either $r(S_{i-1})$ or $\ell(S_{i+1})$. We define a partition $S_i=S_i^-\cup S_i^+$, where $S_i^-$ (resp., $S_i^+$) is the set of points in $S_i$ for which the closest point of the opposite color is in $S_{i-1}$ (resp., $S_{i+1}$). Note that $S_1^-=\emptyset$ and $S_k^+=\emptyset$, but $S_i^-$ or $S_i^+$ may be empty for any $1<i<k$.

Recall, that if $G=(V,E,w)$ is an edge-weighted connected graph with positive edge weights, then for every nontrivial cut $V=V_1\cup V_2$, the MST of $G$ contains a minimum-weight edge between $V_1$ and $V_2$. We apply this observation for nontrivial cuts $S=\{s_1,\ldots , s_{j}\}\cup \{s_{j+1},\ldots , s_n\}$ for $i=1,\ldots, n-1$ in the complete bipartite graph $K(R,B)$. Specifically, if $s_j$ and $s_{j+1}$ are in different blocks, then the shortest bichromatic edge across the cut is $s_js_{j+1}$; in this case $s_j=r(S_i)$ and $\ell(S_{i+1})$ for some $1\leq i\leq k$. 
Now assume that both $s_j$ and $s_{j+1}$ are in the same block $S_i$. If $s_j,s_{j+1}\in S_i^-$ (resp., $s_j,s_{j+1}\in S_i^+$), then $r(S_{i-1}) s_j$ (resp., $\ell(S_{j+1})s_{i+1}$) is the shortest bichromatic edge across the cut. Finally, if $s_i\in S_j^-$ and $s_{i+1}\in S_j^+$, then the shortest bichromatic edge across the cut is the shorter of two candidates: $r(S_{i-1}) s_j$ or $\ell(S_{j+1})s_{i+1}$. Note that we considered $n-1$ nontrivial cuts, and obtained $n-1$ distinct edges: Let $H$ be the graph formed by these edges. 
Since $H$ has $n-1$ edges and it is a subgraph of $\mathrm{minBST}(S)$, then 
$H= \mathrm{minBST}(S)$. 

We give a tight bound on the stretch of the bichromatic MST on a line.

\begin{proposition}\label{prop:1D}
    For a set $S=B\cup R$ of $n$ bichromatic points on a line, $\mathrm{minBST}(S)$ is a 7-spanner; and for every $\eps>0$, there is a bichromatic point set $S\subset \mathbb{R}$ such that the stretch of $\mathrm{minBST}(S)$ is at least $7-\eps$.
\end{proposition}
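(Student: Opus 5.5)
The proof has two parts: an upper bound showing that every bichromatic pair is connected in $\mathrm{minBST}(S)$ by a path of length at most $7|ab|$, and a matching lower-bound construction. Both rely on the characterization of $\mathrm{minBST}(S)$ just derived. Every point $s\in S_i^-$ is joined to $r(S_{i-1})$ and every $s\in S_i^+$ to $\ell(S_{i+1})$. Each gap edge $r(S_i)\ell(S_{i+1})$ is present. At the split point inside a block, exactly one of the two candidate edges is present, namely the shorter one. For two points $s_j\in S_i^-$ and $s_{j+1}\in S_i^+$, this leaves $S_i^-$ attached to the left neighbouring block and $S_i^+$ attached to the right one. The component structure is then linked through the boundary points $r(S_{i-1})$ and $\ell(S_{i+1})$, with one extra connection across the split: either $r(S_{i-1})s_{j+1}$ or $\ell(S_{i+1})s_j$, whichever is shorter.

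\textbf{Upper bound.} Take $a\in S_i$ and $b\in S_j$ with $i<j$ and opposite colours. I would build an explicit tree path from $a$ to $b$ and bound its length by counting how many path edges cover each point of the line, as in the proof of \Cref{prop:3}. Between consecutive blocks, the path should travel along the gap edges $r(S_m)\ell(S_{m+1})$, together with the edges joining $\ell(S_{m+1})$ or $r(S_{m+1})$ to the boundaries of block $S_{m+1}$. Inside an intermediate block, the path may need to cross the split via the extra edge. That edge has length at most the distance to the nearer opposite-colour neighbour, so it can be charged against the block width and the adjacent gaps.

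The delicate parts are the ends of the path. If $a\in S_i^-$, the path from $a$ first goes left to $r(S_{i-1})$, possibly back across the split, and only then right. Such a backward detour has length at most about $2\,\dist(a,r(S_{i-1}))$. Because $a\in S_i^-$, this distance is at most $\dist(a,\ell(S_{i+1}))\le|ab|$. The intended accounting is that the portion of the line between $a$ and $b$ is covered at most 3 times, as in \Cref{prop:3}, and each of the two endpoint detours contributes at most $2|ab|$. That gives the target bound $3+2+2=7$. In the common case the detours should telescope with the covering count, so a case analysis on whether $a\in S_i^\pm$, $b\in S_j^\pm$, and $j=i+1$ or $j>i+1$, together with the direction of the split edge in the relevant blocks, should yield $|P|\le 7|ab|$.

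\textbf{Lower bound.} Here I would realise the extremal configuration: $a$ is pushed left to a red block on its left, then routed back across a split edge, and symmetrically at $b$, with all the closest-colour inequalities nearly tight. Concretely, place about six small clusters with carefully chosen gaps, making the ties within $\eps$ so that the tree path covers $[a,b]$ about 7 times. I would first try a configuration of the form $B\,R\,\ldots\,B\,R$ with gaps $1,1,\ldots$ perturbed so that each detour has length just under $|ab|$.

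The main obstacle is the upper-bound case analysis. The difficulty is that split edges may point in either direction, and the accounting must show that the middle portion is covered at most 3 times even when the path enters a block on the wrong side of its split. I expect to need the minimality inequality for the chosen split edge, that it is shorter than the alternative candidate, to absorb one of the coverings.
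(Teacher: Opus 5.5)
There is a genuine gap: your upper bound is a plan, not a proof. Everything rests on two claims. The first is that the tree path from $a$ to $b$ covers every point of $[a,b]$ at most three times. The second is that it leaves $[a,b]$ only through the detours $[r(S_{i-1}),a]$ and $[b,\ell(S_{j+1})]$, each traversed at most twice. You defer exactly this to a case analysis that you say ``should yield'' the bound, and you never carry it out.

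Your description of the middle is also inaccurate. The path does not travel monotonically along gap edges. Generically, $S_m^+\cup S_{m+1}^-$ induces a bistar centred at $r(S_m)$ and $\ell(S_{m+1})$, and consecutive bistars are linked only by split edges. So the path crosses each intermediate gap forward, then back along $r(S_m)\ell(S_{m+1})$, then forward again; this zigzag is where the factor $3$ comes from. Your bound on the split edge is also backwards. The split edge is at least the nearest-opposite-colour distance of each of its endpoints, not at most; only minimality against the competing candidate bounds it.

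Your accounting can be completed. For instance, every tree edge crossing the gap between $S_m$ and $S_{m+1}$ is incident to $r(S_m)$ or $\ell(S_{m+1})$. So a path crosses that gap at most four times, hence at most three times by parity if the gap lies between $a$ and $b$. But no such argument is in your proposal, and the endpoint configurations and degenerate blocks ($S_m^-$ or $S_m^+$ empty) remain unhandled.

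The paper avoids this global accounting with two ideas you are missing. First, it reduces to consecutive blocks. The path $(a,\ell(S_{i+1}),\dots,\ell(S_{j-1}),b)$ is a monotone path in $K(R,B)$ of length $|ab|$ whose edges all join consecutive blocks, so stretch $7$ for such pairs implies stretch $7$ in general. Second, for $a\in S_i$ and $b\in S_{i+1}$, it routes through the gap edge $r(S_i)\ell(S_{i+1})$. It shows that $a$ reaches $\ell(S_{i+1})$ by at most three tree edges, each of length at most $|ab|$:
\begin{itemize}
\item directly if $a\in S_i^+$;
\item otherwise via $r(S_{i-1})$, using $|a\,r(S_{i-1})|\le|a\,\ell(S_{i+1})|$ for $a\in S_i^-$, and then across the split of $S_i$, where minimality of the split edge gives the bound.
\end{itemize}
Symmetrically, $r(S_i)$ reaches $b$ within $3|ab|$, for a total of $3+1+3$.

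Your lower bound is likewise only sketched, and four blocks suffice rather than six. Take four clusters of width $\eps$ at unit gaps, with interior points $a_1\in S_2^-$, $a_2\in S_2^+$, $b_1\in S_3^-$, $b_2\in S_3^+$ placed so that the split edges are $r(S_1)a_2$ and $b_1\ell(S_4)$. Then the tree path $(a_1,r(S_1),a_2,\ell(S_3),r(S_2),b_1,\ell(S_4),b_2)$ has seven edges of length at least $1$, while $|a_1b_2|\le 1+2\eps$. Note that $[a_1,b_2]$ itself is not covered seven times. The middle gap is traversed three times, and each outer gap, which lies outside $[a_1,b_2]$, is traversed twice.
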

\begin{proof}
It is enough to show that for any pair of points $a,b$ in consecutive blocks, $\mathrm{minBST}(S)$ contains a path of length at most $7\, |ab|$. Indeed. 
if $a\in S_i$ and $b\in S_j$, where $i<j$, then we can first consider the 
path $P=(a,\ell(S_{i+1}),\ldots ,\ell(S_{j-1}), b)$ in $K(R,B)$. 
Note that $|P|=|ab|$, and every edge of $P$ is between points in consecutive blocks. 
By replacing every edge $e\subseteq P$ by a path in $\mathrm{minBST}$ of length at most $7\, |e|$, we obtain an $ab$-path in in $\mathrm{minBST}$ of length at most 
$\sum_{e\subseteq P} 7\, |e| =7\, |P|=7\, |ab|$.

We may assume w.l.o.g.\ that $a\in S_i$ and $b\in S_{i+1}$. We claim that $\mathrm{minBST}(S)$ contains a path $P_1$ from $a$ to $\ell(S_{i+1})$ of length $|P_1|\leq 3\, |ab|$; and symmetrically a path $P_2$ from $r(S_i)$ to $b$ of length $|P_2|\leq 3\, |ab|$. Then the concatenation $P_1\oplus\ell(S_{i+1})r(S_i)\oplus P_2$ is an $ab$-path in $\mathrm{minBST}(S)$, and its length is at most $|P_1| + |\ell(S_{i+1}) r(S_i)| +|P_2|\leq (3+1+3)|ab|=7\, |ab|$, as required. It remains the prove the claim. We distinguish between two cases:
\begin{itemize}
\item \textbf{Case~1}: \textbf{$a\in S_i^+$ or  $a=r(S_i^-)$ and $r(S_i^-)\ell(S_{i+1})\in \mathrm{minBST}(S)$.} In this case, $P_1$ is a single edge path, and $|P_1|=| a\, \ell(S_{i+1})|\leq |ab|$. 
\item \textbf{Case~2}: \textbf{$a\in S_i^-$ and if $a=r(S_i^-)$, then $r(S_i^-)\ell(S_{i+1})\notin \mathrm{minBST}(S)$.} The latter condition implies that $r(S_{i-1})\ell(S_i^+)\in \mathrm{minBST}(S)$ because 
$r(S_{i-1})\ell(S_i^+)$ is shorter than $r(S_i^-)\ell(S_{i+1})$. In particular, we have $|r(S_{i-1})\ell(S_i^+)|\leq |r(S_i^-)\ell(S_{i+1})|\leq |ab|$.
In this case, let $P_1=(a, r(S_{i-1}), \ell(S_i^+), r(S_{i+1}))$. Now each of the three edges of $P_1$ has length at most $|ab|$, and so $|P_1|\leq 3\, |ab|$, as required.
\end{itemize}
This confirms the claim in both cases; and completes the proof of the upper bound.

\begin{figure}[tbh]
        \centering
        \includegraphics[width=.65\textwidth]{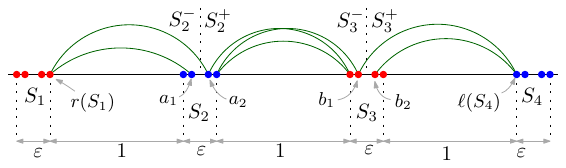}
        \caption{Lower bound construction}
        \label{fig:minBSTs}
\end{figure}

For a lower bound consider a bichromatic point set with 4 blocks, $S=\bigcup_{i=1}^4S_i$, each of 4 points, such that each block spans an interval of length $\eps>0$ for a sufficiently small $\eps>0$, and the distance between any two consecutive blocks is 1; see \Cref{fig:minBSTs}.  Let $a_1,a_2$ be the two interior points in $S_2$ such that $a_1\in S_2^-$, $a_2\in S_2^+$, and $|r(S_1)a_2|<|a_1\ell(S_3)|$. 
Symmetrically, let $b_1,b_2$ be the two interior points in $S_3$ such that $b_1\in S_3^-$, $b_2\in S_3^+$, and $|b_1\ell(S_4)|<|r(S_2)b_2|$. Then $a_1$ has degree 1 in $\mathrm{minBST}(S)$, and its only neighbor is $r(S_1)$. Similarly, the only neighbor of $b_2$ is $\ell(S_4)$. The set $S_2^+\cup S_3^-$ induces a bistar centered at $r(S_2)$ and $\ell(S_3)$: A bistar on 4 vertices forms  a spanning path on $S_2^+\cup S_3^-$ between $a_2$ and $b_1$. Overall, the $a_1b_2$-path in $\mathrm{minBST}(S)$ is
$$P=(a_1, r(S_1),a_2,\ell(S_3), r(S_2),b_1, \ell(S_4),b_2).$$
We have $|a_1b_2|\leq 1+2\eps$, but every edge of $P$ has length at least $1$. Consequently the stretch of $\mathrm{minBST}(S)$ is at least $|P|/|a_1b_2|\geq 7/(1+2\eps)$, which tends to 7 as $\eps\to 0$. 
\end{proof}

\section{Conclusions}
\label{sec:conclusion}

We conclude with a few open problems. 
What is the tight dependence on $\eps>0$ for the size of a $(3+\eps)$-spanner for $n$ bichromatic points in the plane? We have proved an upper bounds of $O(\sqrt{1/\eps}\cdot n)$ in the plane, and $2n-3$ (i.e., no dependence on $\eps$) on the line. 
 
The \emph{lightness} of a spanner $H$ for an edge-weighted graph $G$ is the ratio $w(H)/w(\mathrm{MST}(G))$ of the weight of the spanner to the weight of a minimum spanning tree of $G$. On bichromatic points on the line, we have shown (\Cref{prop:1D}) that there is a 7-spanner of lightness 1. It remains an open to obtain upper bounds for stretch $3\leq t<7$ on the line, and for any $t>3$ in the plane. 
 
Our results give existential bounds on the size of a bichromatic spanners. The corresponding \emph{optimization problems} remain open: For a given bichromatic point set $S=R\cup B$ and stretch $t\geq 1$, find a $t$-spanner of $K(R,B)$ of minimum size or minimum weight (note that for a given instance, stretch $1\leq t<3$ may be feasible). We conjecture that these problems are NP-hard already in Euclidean plane. Finding good approximations is left for future research. For monochromatic spanners in the plane, these optimization problems are known to be NP-hard, and bicriteria approximation algorithms were  obtained only recently~\cite{LSTTZ26}. However, for bichromatic points, the optimization problem is already open for $n$ red-blue points on the line with stretch $1<t<7$. 

\bibliographystyle{alphaurl}
\bibliography{spanners}

@article{BoseCCMMS09,
  author       = {Prosenjit Bose and
                  Paz Carmi and
                  Mathieu Couture and
                  Anil Maheshwari and
                  Pat Morin and
                  Michiel H. M. Smid},
  title        = {Spanners of Complete $k$-Partite Geometric Graphs},
  journal      = {{SIAM} J. Comput.},
  volume       = {38},
  number       = {5},
  pages        = {1803--1820},
  year         = {2009},
  doi          = {10.1137/070707130}
}

@article{BoseCCMSZ09,
  author       = {Prosenjit Bose and
                  Paz Carmi and
                  Mathieu Couture and
                  Anil Maheshwari and
                  Michiel H. M. Smid and
                  Norbert Zeh},
  title        = {Geometric spanners with small chromatic number},
  journal      = {Comput. Geom.},
  volume       = {42},
  number       = {2},
  pages        = {134--146},
  year         = {2009},
  doi          = {10.1016/J.COMGEO.2008.04.003}
}

@article{Abu-AffashBCM21,
  author       = {A. Karim Abu{-}Affash and
                  Sujoy Bhore and
                  Paz Carmi and
                  Joseph S. B. Mitchell},
  title        = {Planar bichromatic bottleneck spanning trees},
  journal      = {J. Comput. Geom.},
  volume       = {12},
  number       = {1},
  pages        = {109--127},
  year         = {2021},
  doi          = {10.20382/JOCG.V12I1A5}
}

@article{AgarwalES91,
  author       = {Pankaj K. Agarwal and
                  Herbert Edelsbrunner and
                  Otfried Schwarzkopf},
  title        = {Euclidean Minimum Spanning Trees and Bichromatic Closest Pairs},
  journal      = {Discret. Comput. Geom.},
  volume       = {6},
  pages        = {407--422},
  year         = {1991},
  doi          = {10.1007/BF02574698}
}

@article{AkitayaBDKST25,
  author       = {Hugo A. Akitaya and
                  Ahmad Biniaz and
                  Erik D. Demaine and
                  Linda Kleist and
                  Frederick Stock and
                  Csaba D. T{\'{o}}th},
  title        = {Minimum Plane Bichromatic Spanning Trees},
  journal      = {{ACM} Trans. Algorithms},
  volume       = {21},
  number       = {4},
  pages        = {48:1--48:14},
  year         = {2025},
  doi          = {10.1145/3747591}
}

@article{BorgeltKLLMSV09,
  author       = {Magdalene G. Borgelt and
                  {Marc J. van} Kreveld and
                  Maarten L{\"{o}}ffler and
                  Jun Luo and
                  Damian Merrick and
                  Rodrigo I. Silveira and
                  Mostafa Vahedi},
  title        = {Planar bichromatic minimum spanning trees},
  journal      = {J. Discrete Algorithms},
  volume       = {7},
  number       = {4},
  pages        = {469--478},
  year         = {2009},
  doi          = {10.1016/J.JDA.2008.08.001}
}

@book{BergCKO08,
  author       = {{Mark de} Berg and
                  Otfried Cheong and
                  {Marc J. van} Kreveld and
                  Mark H. Overmars},
  title        = {Computational Geometry: {A}lgorithms and Applications},
  edition={3rd},
  publisher    = {Springer},
  year         = {2008},
  doi          = {10.1007/978-3-540-77974-2}
}

@article{BiniazBMS18,
  author       = {Ahmad Biniaz and
                  Prosenjit Bose and
                  Anil Maheshwari and
                  Michiel H. M. Smid},
  title        = {Plane Bichromatic Trees of Low Degree},
  journal      = {Discret. Comput. Geom.},
  volume       = {59},
  number       = {4},
  pages        = {864--885},
  year         = {2018},
  doi          = {10.1007/S00454-017-9881-Z}
}

@inproceedings{BiniazMS14a,
  author       = {Ahmad Biniaz and
                  Anil Maheshwari and
                  Michiel H. M. Smid},
  title        = {Bottleneck Bichromatic Plane Matching of Points},
  booktitle    = {Proc.\ 26th Canadian Conference on Computational Geometry ({CCCG})},
  publisher    = {Carleton University, Ottawa, ON},
  year         = {2014},
  url          = {http://www.cccg.ca/proceedings/2014/papers/paper63.pdf}
}

@article{BandyapadhyayBB21,
  author       = {Sayan Bandyapadhyay and
                  Aritra Banik and
                  Sujoy Bhore and
                  Martin N{\"{o}}llenburg},
  title        = {Geometric planar networks on bichromatic collinear points},
  journal      = {Theor. Comput. Sci.},
  volume       = {895},
  pages        = {124--136},
  year         = {2021},
  doi          = {10.1016/J.TCS.2021.09.035}
}

@article{BiniazBEMMS18,
  author       = {Ahmad Biniaz and
                  Prosenjit Bose and
                  David Eppstein and
                  Anil Maheshwari and
                  Pat Morin and
                  Michiel H. M. Smid},
  title        = {Spanning Trees in Multipartite Geometric Graphs},
  journal      = {Algorithmica},
  volume       = {80},
  number       = {11},
  pages        = {3177--3191},
  year         = {2018},
  doi          = {10.1007/S00453-017-0375-4}
}

@article{Abu-AffashBCC19,
  author       = {A. Karim Abu{-}Affash and
                  Sujoy Bhore and
                  Paz Carmi and
                  Dibyayan Chakraborty},
  title        = {Bottleneck bichromatic full {S}teiner trees},
  journal      = {Inf. Process. Lett.},
  volume       = {142},
  pages        = {14--19},
  year         = {2019},
  doi          = {10.1016/J.IPL.2018.10.003}
}

@article{BaltzS05,
  author       = {Andreas Baltz and
                  Anand Srivastav},
  title        = {Approximation algorithms for the {E}uclidean bipartite {TSP}},
  journal      = {Oper. Res. Lett.},
  volume       = {33},
  number       = {4},
  pages        = {403--410},
  year         = {2005},
  doi          = {10.1016/J.ORL.2004.08.002}
}

@article{BhoreNTW26,
  author       = {Sujoy Bhore and
                  Martin N{\"{o}}llenburg and
                  Csaba D. T{\'{o}}th and
                  Jules Wulms},
  title        = {Fully Dynamic Maximum Independent Sets of Disks in Polylogarithmic
                  Update Time},
  journal      = {Discret. Comput. Geom.},
  volume       = {75},
  number       = {2},
  pages        = {391--430},
  year         = {2026},
  doi          = {10.1007/S00454-025-00793-8}
}

@article{BeregHKTTZ25,
  author       = {Sergey Bereg and
                  Yuya Higashikawa and
                  Naoki Katoh and
                  Junichi Teruyama and
                  Yuki Tokuni and
                  Binhai Zhu},
  title        = {Constructing red-black spanners for mixed-charging vehicular networks},
  journal      = {Theor. Comput. Sci.},
  volume       = {1023},
  pages        = {114932},
  year         = {2025},
  doi          = {10.1016/J.TCS.2024.114932}
}

@inproceedings{KarpR14,
  author       = {Jeremy Karp and
                  R. Ravi},
  _editor       = {Klaus Jansen and Jos{\'{e}} D. P. Rolim and Nikhil R. Devanur and Cristopher Moore},
  title        = {A $9/7$-Approximation Algorithm for Graphic {TSP} in Cubic Bipartite
                  Graphs},
  booktitle    = {Approximation, Randomization, and Combinatorial Optimization. Algorithms
                  and Techniques ({APPROX/RANDOM})},
  series       = {LIPIcs},
  pages        = {284--296},
  publisher    = {Schloss Dagstuhl},
  year         = {2014},
  doi          = {10.4230/LIPICS.APPROX-RANDOM.2014.284},
}

@inproceedings{Nederlof20,
  author       = {Jesper Nederlof},
  _editor       = {Konstantin Makarychev and Yury Makarychev and Madhur Tulsiani and                  Gautam Kamath and Julia Chuzhoy},
  title        = {Bipartite {TSP} in $o(1.9999^n)$ time, assuming quadratic time matrix multiplication},
  booktitle    = {Proc.\ 52nd {ACM} Symposium on Theory of Computing ({STOC})},
  pages        = {40--53},
  _publisher    = {{ACM}},
  year         = {2020},
  doi          = {10.1145/3357713.3384264}
}

@article{MantasPSW25,
  author       = {Ioannis Mantas and
                  Evanthia Papadopoulou and
                  Rodrigo I. Silveira and
                  Zeyu Wang},
  title        = {The Farthest Color {V}oronoi Diagram in the Plane},
  journal      = {Algorithmica},
  volume       = {87},
  number       = {10},
  pages        = {1393--1419},
  year         = {2025},
  doi          = {10.1007/S00453-025-01311-1}
}

@inproceedings{BOP25,
  author       = {Sang Won Bae and
                  Nicolau Oliver and
                  Evanthia Papadopoulou},
  _editor       = {Oswin Aichholzer and Haitao Wang},
  title        = {Higher-Order Color {V}oronoi Diagrams and the Colorful {C}larkson-{S}hor
                  Framework},
  booktitle    = {Proc. 41st Symposium on Computational Geometry (SoCG)},
  series       = {LIPIcs},
  pages        = {12:1--12:19},
  publisher    = {Schloss Dagstuhl},
  year         = {2025},
  doi          = {10.4230/LIPICS.SOCG.2025.12}
}

@book{HarPeled11,
series={Mathematical Surveys and Monographs},
volume={173},
title={Geometric Approximation Algorithms},
author={Sariel Har-Peled},
year={2011},
publisher={AMS},
address={Boston, MA},
doi={10.1090/surv/173}
}

@inproceedings{Clarkson87,
  author       = {Kenneth L. Clarkson},
  _editor       = {Alfred V. Aho},
  title        = {Approximation Algorithms for Shortest Path Motion Planning},
  booktitle    = {Proc.\ 19th {ACM} Symposium on Theory of Computing (STOC)},
  pages        = {56--65},
  _publisher    = {{ACM}},
  year         = {1987},
  doi          = {10.1145/28395.28402}
}

@article{CallahanK95,
  author       = {Paul B. Callahan and
                  S. Rao Kosaraju},
  title        = {A Decomposition of Multidimensional Point Sets with Applications to
                  $k$-Nearest-Neighbors and $n$-Body Potential Fields},
  journal      = {J. {ACM}},
  volume       = {42},
  number       = {1},
  pages        = {67--90},
  year         = {1995},
  doi          = {10.1145/200836.200853}
}

@inproceedings{LSTTZ26,
  author       = {Hung Le and
                  Shay Solomon and
                  Cuong Than and
                  Csaba D. T{\'{o}}th and
                  Tianyi Zhang},
  title        = {Approximate Light Spanners in Planar Graphs}, 
    booktitle={Proc. 37th ACM-SIAM Symposium on Discrete Algorithms (SODA)},
  year         = {2026},
  doi={10.1137/1.9781611978971.98},
  pages={2669--2701}
}

@book{NS07,
  title = {Geometric Spanner Networks},
  ISBN = {9780511546884},
  DOI = {10.1017/cbo9780511546884},
  publisher = {Cambridge University Press},
  author = {Narasimhan,  Giri and Smid,  Michiel},
  year = {2007}
}

@InProceedings{Keil88,
  Title                    = {Approximating the Complete {E}uclidean Graph},
  Author                   = {J.~Mark Keil},
  Booktitle                = {Proc.\ 1st Scandinavian Workshop on Algorithm Theory {(SWAT})},
  Year                     = {1988},
  Pages                    = {208--213},
  series       = {LNCS},
  volume       = {318},
  publisher    = {Springer},
  doi          = {10.1007/3-540-19487-8_23}
}

@Article{HPM06,
  Title                    = {Fast Construction of Nets in Low-Dimensional Metrics and Their Applications},
  Author                   = {Har-Peled, Sariel and Mendel, Manor},
  Journal                  = {SIAM Journal on Computing},
  Year                     = {2006},
  Number                   = {5},
  Pages                    = {1148–1184},
  Volume                   = {35},
  Doi                      = {10.1137/s0097539704446281}
}

@inproceedings{FischerH05,
  author       = {John Fischer and
                  Sariel Har{-}Peled},
  title        = {Dynamic Well-Separated Pair Decomposition Made Easy},
  booktitle    = {Proc.\ 17th Canadian Conf. Comput. Geom. (CCCG)},
  address={Windsor, ON},
  pages        = {235--238},
  year         = {2005},
  url          = {http://www.cccg.ca/proceedings/2005/32.pdf}
}

\end{document}